\theoremstyle{plain}
\newtheorem{theorem}{Theorem}[section]
\newtheorem{lemma}[theorem]{Lemma} 
\newtheorem{corollary}[theorem]{Corollary}
\theoremstyle{definition}
\newtheorem{example}[theorem]{Example}
\newtheorem{definition}[theorem]{Definition}
\theoremstyle{definition}
\newtheorem{remark}[theorem]{Remark}
\let\originalparagraph\paragraph
\renewcommand{\paragraph}[2][.]{\originalparagraph{#2#1}}
\DeclareMathOperator{\im}{\mathrm im}
\DeclareMathOperator{\coim}{\mathrm coim}
\DeclareMathOperator{\ob}{\mathrm Ob}
\DeclareMathOperator{\Reeb}{Reeb}
\newcommand{\QCat}{\mathbf{Q}}
\newcommand{\DL}{\mathchoice
{\begin{tikzpicture}
\draw (1.0ex,0ex)--(0ex,0ex) -- (0ex,1.0ex);
\end{tikzpicture}}
{\begin{tikzpicture}
\draw (1.0ex,0ex)--(0ex,0ex) -- (0ex,1.0ex);
\end{tikzpicture}}
{\begin{tikzpicture}
\draw[thick] (1.0ex,0ex)--(0ex,0ex) -- (0ex,1.0ex);
\end{tikzpicture}}
{\begin{tikzpicture}
\draw[line width=.6pt](.7ex,0ex)--(0ex,0ex) -- (0ex,.7ex);
\end{tikzpicture}}
}
\renewcommand{\phi}{\varphi}
\newcommand{\B}{\mathcal B}
\newcommand{\C}{\mathcal C}
\newcommand{\D}{\mathcal D}
\newcommand{\G}{\mathcal{G}}
\newcommand{\R}{\mathbb{R}}
\newcommand{\FS}{\mathcal S^\uparrow}
\newcommand{\W}{\mathcal{W}}
\newcommand{\Z}{\mathbb{Z}}
\newcommand{\RCat}{\mathbf{R}}
\newcommand{\ZCat}{\mathbf{Z}}
\newcommand{\PCat}{\mathbf{P}}
\newcommand{\CoEx}[0]{E}
\newcommand{\E}[1]{\ifthenelse{\equal{#1}{}}{E}{E(#1)}}
\newcommand{\FI}[1]{\mathcal S(#1)}
\newcommand{\Set}{\mathbf{Set}}
\newcommand{\idf}[1]{\emph{#1}}
\newcommand{\fl}{\mathrm{fl}}
\newcommand{\npGen}[1]{{#1{\kern .1ex}}^{\DL}}
\newcommand{\LSB}{\mathcal L}
\newcommand{\I}{{\mathcal I}}
\newcommand{\J}{{\mathcal J}}
\newcommand{\K}{{\mathcal K}}
\newcommand{\id}{{\rm id}}
\newcommand{\Top}{\mathbf{Top}}
\newcommand{\vect}{\mathbf{vec}}
\newcommand{\Vect}{\mathbf{Vec}}
\newcommand{\Hom}{{\rm Hom}}
\newcommand{\kk}{\mathbb{K}}
\newcommand{\REComp}[2]{\ifthenelse{\equal{#1}{}}{X_{#2}}{X_{#2}(#1)}}
\newcommand{\HH}{{\rm H}}
\renewcommand{\subseteq}{\subset}
\newcommand{\Op}{\mathcal{O}}
\renewcommand{\subseteq}{\subset}
\date{}
\title{Computational Complexity of the Interleaving Distance\footnote{M. B. Botnan has been supported by the DFG Collaborative Research Center SFB/TR 109 “Discretization in Geometry and Dynamics”. This work was partially carried out while the authors were visitors to the Hausdorff Center for Mathematics, Bonn, during the special Hausdorff program on applied and computational topology.}}
\author{Håvard Bakke Bjerkevik\thanks{Norwegian University of Science and Technology, Trondheim, Norway; \texttt{havard.bjerkevik@ntnu.no}} \and Magnus Bakke Botnan\thanks{TU M\"unchen, Munich, Germany; \texttt{botnan@ma.tum.de}}}
\begin{document}
\maketitle 
\begin{abstract}
The interleaving distance is arguably the most prominent distance measure in topological data analysis. In this paper, we provide bounds on the computational complexity of determining the interleaving distance in several settings. We show that the interleaving distance is NP-hard to compute for persistence modules valued in the category of vector spaces. In the specific setting of multidimensional persistent homology we show that the problem is at least as hard as a matrix invertibility problem. Furthermore, this allows us to conclude that the interleaving distance of interval decomposable modules depends on the characteristic of the field. Persistence modules valued in the category of sets are also studied. As a corollary, we obtain that the isomorphism problem for Reeb graphs is graph isomorphism complete.
\end{abstract}
\section{Introduction} 
For a category $\C$ and a poset $\PCat$ we define a \emph{$\PCat$-indexed (persistence) module valued in $\C$} to a be a functor $M: \PCat\to \C$. We will denote the associated functor category by $\C^\PCat$. If $M,N\in \C^\PCat$ then $M$ and $N$ are of the same \emph{type}. Such functors appear naturally in applications, and most commonly when $\PCat=\RCat^n$, $n$-tuples of real numbers under the normal product order, and $\C=\Vect_{\kk}$, the category of vector spaces over the field $\kk$, or $\C=\Set$, the category of sets. The field $\kk$ is assumed to be finite. We suppress notation and simply write $\Vect$ when $\kk$ is an arbitrary finite field. The notation $p\in \PCat$ denotes that $p$ is an object of $\PCat$. 
\begin{remark}
Throughout the paper we make use of basic concepts from category theory. The reader unfamiliar to such ideas will find the necessary background material in the first few pages of \cite{bubenik2014categorification}.
\end{remark}

Assume that $h: X\to \R$ is a continuous function of ``Morse type'', a generalization of a Morse function on a compact manifold. Roughly, a real-valued function is of Morse type if the homotopy type of the fibers changes at finite set of values; see \cite{carlsson2009zigzag} for a precise definition. We shall now briefly review four different scenarios in which functors of the aforementioned form can be associated to $h$. 

Let $\HH_p: \Top \to \Vect_\kk$ denote the $p$-th singular homology functor with coefficients in $\kk$, and let $\pi_0: \Top\to \Set$ denote the functor giving the set of path-components. We also associate the two following functors to $h$ whose actions on morphisms are given by inclusions: 
\begin{align*}
\FS(h): \RCat\to \Top & \quad & \FI{h}: \RCat^2\to \Top\\
\FS(h)(t) = \{x\in X \mid h(x) \leq t\} & \quad & \FI{h}(-s,t) = \{x\in X \mid s\leq h(x) \leq t\}
\end{align*}
\begin{itemize}
\item \emph{Persistent Homology} studies the evolution of the homology of the sublevel sets of $h$ and is perhaps the most prominent tool in topological data analysis \cite{carlsson2009zigzag}. Specifically, the \emph{$p$-th sublevel set persistence module associated to $h$} is the functor $\HH_p\FS(h):\RCat\to \Vect$. Importantly, such a module is completely determined by a collection of intervals $\B(\HH_p\FS(h))$ called the \emph{barcode} of $\HH_p\FS(h)$. This collection of intervals is then in turn used to extract topological information from the data at hand. In \cref{MorseBlob} we show the associated barcode for $p=0$ and $p=1$ for a function of Morse type. 

\item Upon replacing $\HH_p$ by $\pi_0$ in the above construction we get a \emph{merge tree}. That is, the \emph{merge tree associated to $h$} is the functor $\tau^h: \pi_0 \FS(h): \RCat \to \Set$. A merge tree captures the evolution of the path components of the sublevel sets of $h$ and can be, as the name indicates, be visualized as (a disjoint union of) rooted trees. See \cref{MorseBlob} for an example. 
\item 
The two aforementioned examples used sublevel sets. A richer invariant is obtained by considering interlevel sets: define the \emph{$p$-th interlevel set persistence of $h$} to be the functor $\HH_p \FI{h}: \RCat^2\to \Vect$. Analogously to above, such a module is completely determined by a collection $\B(\HH_p\FI{h})$ of simple regions in $\R^2$. However, it is often the collection of intervals $\LSB_p(h)$ obtained by the intersection of these regions with the anti-diagonal $y=-x$ which are used in data analysis. We refer the reader to \cite{botnan2016algebraic} for an in-depth treatment. In \cref{MorseBlob} we show an example of the $0$-th interlevel set barcode. Observe how the endpoints of the intervals correspond to different types of features of the Reeb graph. 

\item
Just as interlevel set persistence is a richer invariant than sublevel set persistence, the \emph{Reeb graph} is richer in structure than the merge tree. Specifically, we define the functor $\Reeb^h:= \pi_0\FI{h}: \RCat^2\to \Set$. Just as for Merge trees, $\Reeb^h$ admits a visualization of a graph; see \cref{MorseBlob}. In particular, this appealing representation has made Reeb graphs a popular objects of study in computational geometry and topology, and they have found many applications in data visualization and exploratory data analysis.
\end{itemize}
\begin{figure}
\centering
\begin{tikzpicture}[scale=.35]
\begin{scope}[rotate=90]
\begin{scope}
\fill[red!60,draw=black,even odd rule]
(0,0) to [out=90,in=180] (3,1.5) to [out=0,in=270] (7,1.5) to [out=90,in=270] (4,2) to [out=90,in=90] (9,1) to [out=270,in=90] (6,-.5) to [out=270,in=90] (8,-1) to [out=270,in=0] (4.5,-1.5) to [out=180,in=270] (1,-1) to [out=90,in=270] (2,0) to [out=90,in=270] (0,0)
(3,-.2) to [out=90, in=90] (5,-.2) to [out=270, in=270] (3,-.2);
\draw[very thick, ->] (-1,4) -- (10,4);
\def\x{-26}
\draw[dotted] (0,0) -- (0,\x);
\draw[dotted] (1,-1) -- (1,\x);
\draw[dotted] (2,0) -- (2,\x);
\draw[dotted] (3,-.2) -- (3,\x);
\draw[dotted] (4,2) -- (4,\x);
\draw[dotted] (5,-.2) -- (5,\x);
\draw[dotted] (6,-.5) -- (6,\x);
\draw[dotted] (7,1.5) -- (7,\x);
\draw[dotted] (8,-1) -- (8,\x);
\draw[dotted] (9,1) -- (9,\x);
\end{scope}
\begin{scope}[yshift=-5cm]
\node at (-1,0){${\Reeb^h}$};
\draw[thick] (0,.5) to (2,0);
\draw[thick] (1,-.5) to (2,0);
\draw[thick] (2,0) to (3,0);
\draw[thick] (3,0) to [out=70,in=110] (5,0);
\draw[thick] (3,0) to [out=290,in=250] (5,0);
\draw[thick] (4,1.5) to (7,1);
\draw[thick] (5,0) to (6,0);
\draw[thick] (6,0) to (7,1);
\draw[thick] (7,1) to (9,1.0);
\draw[thick] (6,0) to (8,-.5);
\end{scope}
\begin{scope}[yshift=-8.5cm]
\node at (-1,0){ $\tau^h$};
\draw[thick] (0,.5) to (2,0);
\draw[thick] (1,-.5) to (2,0);
\draw[thick] (2,0) to (7,.5);
\draw[thick] (4,1.5) to (7,.5);
\draw[thick, ->] (7,.5) to (10,.5);
\end{scope}
\begin{scope}[yshift=-14cm]
\node at (-1,0){$\B(H_0\FS(h))$};
\draw[thick, [->] (0,.5) to (10,.5);
\draw[thick, [-)] (1,0) to (2,0);
\draw[thick, [-)] (4,-.5) to (7,-.5);
\end{scope}
\begin{scope}[yshift=-20cm]
\node at (-1,0){$\B(H_1\FS(h))$};
\draw[thick, [->] (5,0) to (10,0);
\end{scope}
\begin{scope}[yshift=-26cm]
\node[text width=3cm,align=center] at (-1,0){\small $\LSB_0(h)$};
\draw[thick, [-)] (1,2) to (2,2);
\draw[thick, (-)] (3,1.5) to (5,1.5);
\draw[thick, [-)] (4,1) to (7,1);
\draw[thick, [-)] (8,0.5) to (6,0.5);
\draw[thick, {[-]}] (0,0) to (9,0);

\end{scope}
\end{scope}
\end{tikzpicture}
\caption{The height function of the solid shape is of Morse type. The associated Reeb graph, merge tree, sublevel set barcodes, and interlevel set barcode are shown to the right. \label{MorseBlob}}
\end{figure}
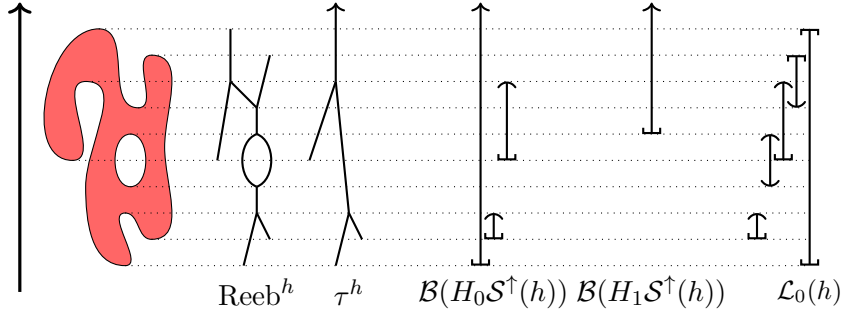

These are all examples of topological invariants arising from a single real-valued function. There are many settings for which it is more fruitful to combine a collection of real-valued functions into a single function $g: X\to \R^n$ \cite{carlsson2009theory}. By combining them into a single function we not only learn how the data looks from the point of view of each function (i.e. a type of measurement) but how the different functions (measurements) interact. One obvious way to assign a (algebraic) topological invariant to $g$ is to filter it by sublevel sets. That is, define $\FS(g): \RCat^n \to \Top$ by $\FS(g)(t) = \{x\in X \mid g(x) \leq t\}.$ The associated functor $\HH_p\FS(g): \RCat^n\to \Vect$ is an example of an  \emph{$n$-dimensional persistence module}. We saw above that for $n=1$ this functor is completely described by a collection of intervals. This is far from true for $n\geq 2$: there exists no way to describe such functors by interval-like regions in higher-dimensional Euclidean space. Even the task of parameterizing such (indecomposable) modules is known to be a \emph{hopeless} problem (so-called \emph{wild representation type}) \cite{barot2014introduction}.

\subsection{The Interleaving Distance}
Different types of distances have been proposed on various types of persistence modules with values in $\Vect$ \cite{scolamiero2017multidimensional, lesnick2015theory,chazal2009proximity, bubenik2014metrics, biasotti2008multidimensional}. Of all these, the \emph{interleaving distance} is arguably the most prominent for the following reasons: the theory of interleavings lies at the core of the theoretical foundations of 1-dimensional persistence, notably through the Isometry Theorem (\cref{teo:isometry}). Furthermore, it was shown by Lesnick that when $\kk$ is a prime field, the interleaving distance is the \emph{most discriminative} of all \emph{stable} metrics on such modules. We refer to \cite{lesnick2015theory} for the precise statement. As we shall see, it is also an immediate consequence of \cref{teo:isometry} that the interleaving distance for 1-dimensional persistence modules can be computed in polynomial time.

Lesnick's result generalizes to $n$-dimensional persistence modules, but the computational complexity of computing the interleaving distance of such modules remains unknown. An efficient algorithm to compute the interleaving distance could carry a profound impact on topological data analysis: the standard pipeline for $1$-dimensional persistent homology is to first compute the barcode and then perform analysis on the collection of intervals. However, for multi-dimensional persistence there is no way of defining the barcode. With an efficient algorithm for computing the interleaving distance at hand it would still not be clear how to analyze the persistence modules individually, but we would have a theoretical optimal way of comparing them. This in turn could be used in clustering, kernel methods, and other kinds of data analysis widely applied in the 1-dimensional setting. 
%

\subsection*{Complexity}
The purpose of this paper is to determine the computational complexity of computing the interleaving distance. To make this precise, we need to associate a notion of size to the persistence modules. 
\begin{definition}Let $\PCat$ denote a poset category and $M: \PCat\to \C$. 
\begin{itemize}
\item For $\C=\Vect$, define the \emph{total dimension of $M$} to be $\dim M = \sum_{p\in \PCat} \dim M_p$. 
\item For $M: \ZCat \to \Set$, define the \emph{total cardinality of $M$} to be $|M| = \sum_{p\in \PCat} |M_p|$. 
\end{itemize}
\end{definition}

The input size will be the total dimension or the total cardinality and for the the remaining of the paper \textbf{we shall always assume} that those quantities are finite. 
The following shows that there exists an algorithm, polynomial in the input size, which determines whether or not two $\PCat$-indexed modules valued in $\Vect$ are isomorphic. 

\begin{theorem}[\cite{brooksbank2008testing}]
Let $\PCat$ be a finite poset and $M, M': \PCat\to \Vect$. There exists a deterministic algorithm which decides if $M\cong M'$ in $\Op\left((\dim M + \dim M')^6\right)$. 
\label{thm:brooksbank}
\end{theorem}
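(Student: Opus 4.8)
The plan is to recognize the question as an instance of module isomorphism over a finite-dimensional $\kk$-algebra, to which the cited algorithm of Brooksbank and Luks applies, and then to read off the exponent in its running time after substituting the relevant parameters.

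First I would reduce to the common support. Comparing $\dim M_p$ with $\dim M'_p$ for each $p$ settles the problem negatively unless these numbers agree everywhere; assuming they do, $M$ and $M'$ have a common support $S=\{p:M_p\neq 0\}$. If a relation $p\le q$ has an intermediate element $r\notin S$, then $M_{p\le q}=M_{r\le q}\circ M_{p\le r}$ factors through $M_r=0$, and likewise for $M'$, so the naturality square for that pair imposes no constraint; and any square involving an object outside $S$ has both legs zero. Hence a family $(\phi_p)_{p\in\PCat}$ is a natural isomorphism $M\to M'$ exactly when each $\phi_p$ with $p\in S$ is invertible and the squares for pairs $p\le q$ \emph{inside} $S$ commute, and conversely any such family on $S$ extends (by the zero map $0\to0$ elsewhere) to a natural isomorphism on all of $\PCat$. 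I may therefore replace $\PCat$ by the induced poset on $S$, which has $|S|\le\dim M$ objects and at most $(\dim M)^2$ order relations.

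Next I would package this as a module problem. For the (now finite) poset $\PCat$, let $A$ be its incidence algebra over $\kk$: it has basis $\{e_{pq}:p\le q\}$ with $e_{pq}e_{q'r}=\delta_{qq'}e_{pr}$ and unit $\sum_p e_{pp}$, so $\dim_\kk A\le(\dim M+\dim M')^2$ and its structure constants are written down immediately. The standard equivalence between $\Vect^{\PCat}$ and the category of left $A$-modules sends $M$ to $\widetilde M=\bigoplus_p M_p$, on which $e_{pq}$ acts as $M_{p\le q}$ out of the $M_p$-summand and as $0$ on the others; this gives an explicit $(\dim M)\times(\dim M)$ matrix for each basis element, and $M\cong M'$ in $\Vect^{\PCat}$ iff $\widetilde M\cong\widetilde{M'}$ as $A$-modules. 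The algorithm of \cite{brooksbank2008testing} decides isomorphism of two modules over a finite-dimensional algebra over a finite field, presented by structure constants and by the matrices of a generating set, deterministically in time polynomial in $\dim_\kk A$ and the module dimensions; substituting $\dim_\kk A=\Op(n^2)$ and module dimension $\Op(n)$ for $n=\dim M+\dim M'$, and inspecting the degree of that polynomial, yields the stated $\Op(n^6)$ bound.

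The only genuine work here is the two elementary reductions above; all of the algorithmic substance is encapsulated in the cited theorem. Accordingly, the hard part is really just bookkeeping: confirming that specializing the Brooksbank--Luks running-time bound to our parameters produces the exponent $6$ rather than something larger, and checking that their deterministic guarantee survives the reduction — which it does, since the reduction itself is deterministic and runs in time polynomial in $n$.
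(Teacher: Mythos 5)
The paper does not prove this statement; it is imported verbatim as a citation of Brooksbank--Luks, so there is no ``paper's own proof'' to compare against. Your sketch of how the cited result would actually be applied is the natural one, and the two-stage reduction --- first restrict to the induced poset on the common support $S$, then pass to module isomorphism over the incidence algebra of $S$ --- is essentially right. A couple of points are worth flagging.

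There is a small convention slip in the algebra step. With the multiplication $e_{pq}e_{q'r}=\delta_{qq'}e_{pr}$, the action you describe (``$e_{pq}$ acts as $M_{p\le q}$ out of the $M_p$-summand'') is a \emph{right} $A$-module structure, not a left one: for a left action one would need $(e_{pq}e_{qr})\cdot m = e_{pq}\cdot(e_{qr}\cdot m)$, but for $p<q<r$ and $m\in M_p$ the right-hand side is $e_{pq}\cdot 0 = 0$ while the left-hand side is $\varphi_M(p,r)(m)$, which is nonzero in general. Either work with right modules or with the opposite algebra; the isomorphism problem and its complexity are unchanged, so this is cosmetic. Also, the observation about intermediate elements $r\notin S$ is true but not what you need: the reduction to $S$ works simply because any naturality square with at least one vertex outside $S$ has a zero object on one side and is therefore vacuous, regardless of how $\varphi_M(p,q)$ factors.

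The more substantive issue is the one you yourself label as ``bookkeeping'': the theorem's content beyond ``polynomial time'' is the specific exponent $6$, and that is exactly the step you defer with ``inspecting the degree of that polynomial.'' Brooksbank--Luks state their complexity in terms of the module dimension, the number of algebra generators supplied as input, and the field size; to obtain $\Op(n^6)$ one must actually track those parameters through your reduction --- in particular the $\Op(n^2)$ basis elements $e_{pq}$ you feed in as generators, each acting by an $\Op(n)\times\Op(n)$ matrix --- and check that the resulting bound does not exceed $n^6$. As written, your argument yields a polynomial bound in $n$ but does not establish the claimed exponent. Given that the theorem is stated in the paper as a black-box citation, it is defensible to leave the exponent to the source; but then the honest formulation is that the reduction shows the input to Brooksbank--Luks has size polynomial in $n$ and that their deterministic running time is polynomial in that size, with the specific degree delegated to the cited paper rather than derived.
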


This result will be important to us in what ensues because the strongest of interleavings, the 0-interleaving, is by definition a pair of inverse isomorphisms. Furthermore, by choosing an appropriate basis for each vector space, an isomorphism between $M$ and $M'$ is nothing more than a collection of matrices with entries in a finite field. Likewise a $\delta$-interleaving will be nothing more than a collection of matrices over a finite field satisfying certain constraints. When $\C=\Set$ the morphisms are specified by collections of functions between finite sets. Hence, the decision problems considered in this paper \textbf{are trivially in NP}.

Furthermore, it is an immediate property of the Morse type of $h$, that the modules considered above are \emph{discrete}. Intuitively, we say that an $\RCat^n$-indexed persistence module $M$ is discrete if there exists a $\ZCat^n$-indexed persistence module containing all the information of $M$; see \cref{app:discrete}. In practice, persistence modules arising from data will be discrete. Hence, when it comes to algorithmic questions we shall restrict ourselves to the setting in which $\PCat = \ZCat^n$ or a slight generalization thereof. Importantly, the modules considered in this paper can be $\delta$-interleaved only for $\delta\in \{0, 1, 2, \ldots\}$. 

\subsection*{Contributions}
\label{contr}
The contributions of this paper are summarized in \cref{table:contributions}. Concretely, a cell in \cref{table:contributions} gives a complexity bound on the decision problem of deciding if two modules of the given type are $\delta$-interleaved. It is an easy consequence of the definition of the interleaving distance that this is at least as hard as determining the distance itself. The cells with a shaded background indicate that novel contributions to that complexity bound is provided in this paper. Recall that we have defined the input size to be $n = \dim M + \dim M'$ when the modules are valued in $\Vect$, and $n = |M| + |M'|$ when the modules are valued in $\Set$.  Observe that any non-trivial functor $M: \ZCat^m\to \Set$ must have $|M| = \infty$. Hence, when we talk about interleavings of such functors, we shall assume that they are completely determined by a restriction to a finite sub-grid. The input size is then the total cardinalities of the restrictions. We will now give a brief summary of the cells of \cref{table:contributions}. 

\begin{itemize}
\item $\ZCat\to \Vect$. [$\delta\geq 0$] This bound is achieved by first determining the barcodes of the persistence modules and then using \cref{teo:isometry} to obtain the interleaving distance. The complexity of this is $\Op({\rm Find Barcode} + {\rm Match}) = \Op(n^\omega + n^{1.5}\log n) = \Op(n^\omega)$ where $\omega$ is the matrix multiplication exponent\cite{kerber2017geometry}. The details can be found in \cref{app:vec}. In \cite{milosavljevic2011zigzag}, the complexity is shown to be $\Op(n^\omega + n^2 \log^2 n)$ for essentially the same problem, but with a slightly different input size $n$.
\item $\ZCat\to \Set$. [$\delta=0$] Essentially isomorphism of rooted trees; see \cref{app:mergetree}. [$\delta\geq 1$] This follows from arguments in \cite{agarwal2015computing}.
\item $\ZCat^2\to \Vect$. [$\delta=0$] This is \cref{thm:brooksbank} for $\PCat=\ZCat^2$. [$\delta\geq 1$] A \idf{constrained invertibility} (CI) problem is a triple $(P,Q,n)$ where $P$ and $Q$ are subsets of $\{1,2, \dots, n\}^2$. We say that a CI-problem $(P,Q,n)$ is \idf{solvable} if there exists an invertible $n \times n$ matrix $M$ such that $M_{i,j} = 0$ for all $(i,j) \in P$ and $M^{-1}_{i',j'} = 0$ for all $(i',j') \in Q$. We call $(M,M^{-1})$ a \idf{solution} of $(P,Q,n)$. In \cref{sec:multid} we show that a  CI-problem is solvable if and only if an associated pair of $\ZCat^2$-indexed modules is $1$-interleaved. Thus, the interleaving problem is \emph{constrained invertibility-hard} (CI-hard).
\item $\ZCat^2\to \Set$. [$\delta=0$] Reeb graphs are a particular type of functors $\ZCat^2\to \Set$ and deciding if two Reeb graphs are isomorphic is graph isomorphism-hard (GI-hard) \cite{de2015categorified}. In \cref{app:isoGI} we strengthen this result by showing that the isomorphism problem for $\ZCat^2\to \Set$ is in fact GI-\emph{complete}. This also implies that Reeb graph isomorphism is GI-complete. [$\delta\geq 1$] This follows from $\ZCat\to \Set$. 
\item $\ZCat^{L,C}\to \Vect_{\Z/2\Z}$.
For two sets $L$ and $C$, define $\ZCat^{L\to C}$ to be the poset generated by the following disjoint union of posets
$\ZCat^{L\to C} := \bigsqcup_{l\in L, c\in C} \ZCat$
with the added relation $(l,t) < (c,t)$ for every $l\in L$, $c\in C$ and $t\geq 3$.
This poset is a mild generalization of a disjoint union of $\ZCat$'s. [$\delta=0$] Immediate from \cref{thm:brooksbank}. [$\delta \geq 1$] Follows from a reduction from 3-SAT; see \cref{sec:NPHard}.  This shows that computing the generalized interleaving distance of \cite{bubenik2014metrics} for $\Vect$-valued persistence modules is NP-complete in general. 
\end{itemize}
\begin{table}
\begin{center}
\begin{tabular}{ l | c | c | c | c | c  }
type/$\delta$& $\ZCat\to \Vect$ & $\ZCat\to \Set$  & $\ZCat^2\to \Vect$  & $\ZCat^2\to \Set$ & $\ZCat^{L,C} \to \Vect_{\Z/2\Z}$ \\ 
$\delta=0$ & $\Op(n^\omega)$ & \cellcolor{gray!25} $\Op(n)$ & $\Op(n^6)$  & \cellcolor{gray!25} GI-complete & $\Op(n^6)$\\ 
$\delta\geq 1$ & $\Op(n^\omega)$ & NP-complete & \cellcolor{gray!25} CI-hard & NP-complete  & \cellcolor{gray!25}NP-complete\\
\end{tabular}
\caption{The complexity of checking for $\delta$-interleavings between modules $M$ and $M'$. If the target category is $\Vect$ then $n = \dim M + \dim M'$, and if the target category is $\Set$ then $n = |M| + |M'|$. Here $\omega$ is the matrix multiplication exponent. }
\label{table:contributions}
\end{center}
\end{table}

\section{Preliminaries}
For $\PCat$ a poset and $\C$ an arbitrary category, $M:\PCat\to \C$ a functor, and $a,b\in \PCat$, let $M_a= M(a)$, and let $\phi_M(a,b) : M_a \to M_b$ denote the morphism $M(a\leq b)$. 
\subsection{Interleavings} 
\label{sec:interleaving}
In this section we review the theory of interleavings for $\ZCat^n$-indexed modules. For a treatment of the $\RCat^n$-indexed setting see \cite{lesnick2015theory}. For a discussion on interleavings over arbitrary posets see \cite{bubenik2014metrics}. 

For $u\in \ZCat^n$, define the \emph{$u$-shift functor} $(-)(u): \C^{\ZCat^n} \to \C^{\ZCat^n}$ on objects by $M(u)_a = M_{u+a}$, together with the obvious internal morphisms, and on morphisms $f: M\to N$ by $f(u)_a = f(u+a): M(u)_a \to N(u)_a$. For $u\in \{0, 1, \ldots\}^n$, let $\phi_M^u: M \to M(u)$ be the morphism whose restriction to each $M_a$ is the linear map $\phi_M(a, a+u)$. For $\delta\in \{0,1,2\ldots\}$ we will abuse notation slightly by letting $(-)(\delta)$ denote the $\delta(1, \ldots, 1)$-shift functor, and letting $\phi_M^\delta$ denote $\phi^{\delta(1, \ldots, 1)}_M$.

\begin{definition}
Given $\delta\in \{0,1, \ldots\}$, a $\delta$-interleaving between $M,N:\ZCat^n\to \C$ is a pair of morphisms $f: M\to N(\delta)$ and $g: N\to M(\delta)$ such that 
$g(\delta)\circ f = \phi_M^{2\delta}$ and $f(\delta)\circ g = \phi_N^{2\delta}.$
\label{def:interleaving}
\end{definition}
We call $f$ and $g$ \emph{$\delta$-interleaving morphisms}.  If there exists a $\delta$-interleaving between $M$ and $N$, we say $M$ and $N$ are $\delta$-interleaved.
The \emph{interleaving distance} $d_I: \ob(\C^{\ZCat^n})\times\ob(\C^{\ZCat^n})\to [0,\infty]$ is given by $d_I(M,N) = \min \{\delta\in \{0, 1, \ldots\} \mid \text{$M$ and $N$ are $\delta$-interleaved}\}.$ Here we set $d_I(M,N) = \infty$ if there does not exist a $\delta$-interleaving for any $\delta$. 
\label{sec:interleaving}
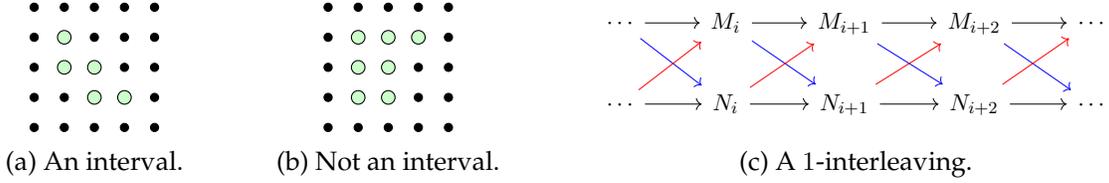
\begin{figure}
\centering
\begin{subfigure}[t]{0.22\textwidth}
\centering
\begin{tikzpicture}[scale=1]
\foreach \x in {0.4,0.8,1.2,1.6,2.0}
\foreach \y in {0.4,0.8,1.2,1.6,2.0}
{
\draw[fill=black] (\x,\y) circle (.3ex);
}
\draw[black,fill=green!20] (0.8,1.6) circle (.5ex);
\draw[black,fill=green!20] (1.2,0.8) circle (.5ex);
\draw[black,fill=green!20] (0.8,1.2) circle (.5ex);
\draw[black,fill=green!20] (1.2,1.2) circle (.5ex);
\draw[black,fill=green!20] (1.6,0.8) circle (.5ex);
\end{tikzpicture}
\subcaption{An interval.}
\end{subfigure}
\hskip5pt
\begin{subfigure}[t]{0.22\textwidth}
\centering
\begin{tikzpicture}[scale=1]
\foreach \x in {0.4,0.8,1.2,1.6,2.0}
\foreach \y in {0.4,0.8,1.2,1.6,2.0}
{
\draw[fill=black] (\x,\y) circle (.3ex);
}
\draw[black,fill=green!20] (0.8,0.8) circle (.5ex);
\draw[black,fill=green!20] (0.8,1.2) circle (.5ex);
\draw[black,fill=green!20] (0.8,1.6) circle (.5ex);
\draw[black,fill=green!20] (1.2,1.6) circle (.5ex);
\draw[black,fill=green!20] (1.6,1.6) circle (.5ex);
\draw[black,fill=green!20] (1.2,1.2) circle (.5ex);
\draw[black,fill=green!20] (1.2,0.8) circle (.5ex);
\end{tikzpicture}
\subcaption{Not an interval. }
\end{subfigure}
\hskip5pt
\begin{subfigure}[t]{0.5\textwidth}
\centering
\begin{tikzpicture}[scale=1][baseline= (a).base]
\node[scale=0.8] (a) at (0,0){
\begin{tikzcd}
\cdots\ar[r]\ar[dr, color=blue] & M_i \ar[r]\ar[dr, color=blue] &M_{i+1}\ar[r]\ar[dr, color=blue] & M_{i+2}\ar[r]\ar[dr, color=blue] &\cdots\\
\cdots\ar[r]\ar[ur,color=red]& N_i\ar[r]\ar[ur, color=red]&N_{i+1}\ar[r]\ar[ur, color=red] &N_{i+2}\ar[r]\ar[ur, color=red] &\cdots
\end{tikzcd}};
\end{tikzpicture}
\subcaption{A $1$-interleaving. }
\end{subfigure}
\caption{(a) is an interval in $\ZCat^2$ whereas (b) is not. (c) The persistence modules $M$ and $N$ are 1-interleaved if and only if there exist diagonal morphisms such that the diagram in (c) commutes.}
\label{fig:gridRestriction}
\end{figure}
\subsection{Interval Modules and the Isometry Theorem}
Let $\C = \Vect$. \label{Sec:Barcodes}
An \emph{interval} of a poset $\PCat$ is a subset $\J\subseteq \PCat$ such that 
\begin{enumerate}
\item $\J$ is non-empty.
\item If $a,c\in \J$ and $a\leq b\leq c$, then $b\in \J$.
\item {}[connectivity] For any $a,c\in \J$, there is a sequence $a=b_0,b_1,\ldots, b_l=c$ of elements of $\J$ with $b_i$ and $b_{i+1}$ comparable for $0\leq i\leq l-1$.   
\end{enumerate}
We refer to a collection of intervals in $\PCat$ as a \emph{barcode (over $\PCat$)}.  

\begin{definition}
\label{def:interval}
For $\J$ an interval in $\PCat$, the interval module $I^\J$ is the $\PCat$-indexed module such that
\begin{align*}
I^\J_a&=
\begin{cases}
\kk &{\textup{if }} a\in \J, \\
0 &{\textup{ otherwise}.}
\end{cases}
& \varphi_{I^\J}(a,b)=
\begin{cases}
\id_\kk &{\textup{if }} a\leq b\in I,\\
0 &{\textup{ otherwise}.}
\end{cases}
\end{align*}
\end{definition}
We say a persistence module $M$ is \emph{decomposable} if it can be written as $M\cong V\oplus W$ for non-trivial persistence modules $V$ and $W$; otherwise, we say that $M$ is \emph{indecomposable}. 

A $\PCat$-indexed module $M$ is \emph{interval decomposable} if there exists a collection $\B(M)$ of intervals in $\PCat$ such that 
$M\cong \bigoplus_{\J\in \B(M)} I^{\J}.$
We call $\B(M)$ the \emph{barcode} of $M$. This is well-defined by the Azumaya--Krull--Remak--Schmidt theorem \cite{azumaya1950corrections}. 

\begin{theorem}[Structure of 1-D Modules \cite{crawley2012decomposition,webb1985decomposition}]\label{Structure_Theorem}
Suppose $M: \PCat\to \Vect$ for $\PCat\in \{\RCat,\ZCat\}$ and $\dim M_p<\infty$ for all $p\in \PCat$. Then $M$ is interval decomposable.
\end{theorem}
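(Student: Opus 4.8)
The plan is to construct a direct-sum decomposition $M\cong\bigoplus_{\J}I^{\J}$ into interval modules directly. When $\dim M<\infty$ one could instead split $M$ into indecomposables using the Azumaya--Krull--Remak--Schmidt theorem (invoked above for well-definedness of the barcode) and then classify the indecomposables; the real content of the theorem is the pointwise-finite-dimensional but possibly infinite-support case, and for $\PCat=\ZCat$ it is equivalent, via $M=\bigoplus_{i}M_i$ with $x$ acting through the structure maps, to the classification of $\Z$-graded $\kk[x]$-modules with finite-dimensional graded pieces (Webb). I will sketch the construction, which simultaneously shows that an indecomposable such module is an interval module.

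First, the local step at a single index. Fix $t\in\PCat$ with $M_t\neq 0$. Inside the finite-dimensional space $M_t$ consider the chain of ``images from the left'' $\{\im\phi_M(s,t)\}_{s\le t}$ together with their intersection $\bigcap_{s\le t}\im\phi_M(s,t)$, and the chain of ``kernels to the right'' $\{\ker\phi_M(t,u)\}_{u\ge t}$ together with their union $\bigcup_{u\ge t}\ker\phi_M(t,u)$; because $M_t$ is finite-dimensional each of these families is a finite chain of subspaces. It is a standard fact that a vector space admits a basis adapted simultaneously to two chains of subspaces, so we may choose a basis of $M_t$ adapted to every subspace in both families. Each basis vector then acquires a well-defined birth index $b\in\PCat\cup\{-\infty\}$ (the least $s$ for which it still lies in $\im\phi_M(s,t)$) and death index $d\in\PCat\cup\{+\infty\}$ (the least $u$ for which it lies in $\ker\phi_M(t,u)$), with the obvious left-open/left-closed bookkeeping when $\PCat=\RCat$, and the number of basis vectors with prescribed $(b,d)$ is a pointwise invariant of $M$.

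The global step is the heart of the matter, and its difficulty is precisely why this is a theorem: one must choose the adapted bases at all indices $t$ coherently, so that every structure map $\phi_M(t,t')$ carries adapted basis vectors to adapted basis vectors or to $0$. Once that is achieved, the equivalence classes of basis vectors under ``lying on a common trajectory'' are indexed by a collection $\B(M)$ of intervals of $\PCat$, each class spans a sub-persistence-module isomorphic to $I^{\J}$, and $M\cong\bigoplus_{\J\in\B(M)}I^{\J}$. In the infinite-support case there is no finite induction along $\PCat$ with which to patch the local choices. Crawley-Boevey's resolution is the functorial-filtration method: one constructs functorial subquotients of $M$ that isolate the bars whose endpoints lie in a prescribed range, checks that the relevant associated graded pieces are visibly direct sums of interval modules (using the two-chains structure at the stalks), and lifts the resulting splittings functorially, the finite-dimensionality of each stalk making every lifting step a finite-dimensional linear-algebra problem. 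For $\PCat=\ZCat$ one may instead argue by a compactness/limit argument: every restriction $M|_{\{-n,\dots,n\}}$ decomposes into interval modules by Gabriel's theorem for quivers of type $A$, the resulting barcodes are consistent under further restriction because restriction is exact and the type-$A$ barcode is unique, and they therefore assemble into a global barcode; upgrading this limiting barcode to an honest decomposition again requires a coherent choice of adapted bases, supplied by the finite-dimensionality of each $M_i$ together with a K\"onig's-lemma argument on the finite sets of local splittings. Either way $M$ is interval decomposable, and the same bookkeeping shows that when $M$ is moreover indecomposable there is a single trajectory, so $M$ has one-dimensional stalks on a convex support with all intermediate structure maps isomorphisms.
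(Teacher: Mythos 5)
The paper does not prove this theorem; it states it as a black box with citations to Crawley-Boevey and Webb. So there is no proof of the paper's own to compare against, and the right question is whether your sketch is a faithful and correct account of the cited proofs.

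Your roadmap is essentially right, and it correctly locates the difficulty. The local step (simultaneously splitting the two chains $\{\im\phi_M(s,t)\}_{s\le t}$ and $\{\ker\phi_M(t,u)\}_{u\ge t}$ inside a finite-dimensional $M_t$ and reading off birth/death data) is standard, and you are right that the whole content of the theorem is the global coherence of these local choices, which is exactly what Crawley-Boevey's functorial-filtration method provides. Two things are worth flagging. First, your sketch never actually carries out the global step: both of the routes you name (functorial filtrations; Gabriel for finite restrictions plus a limiting argument) are referred back to the literature rather than argued, so this is a description of a proof rather than a proof, which for the purposes of this paper is the same situation as citing the result. Second, the K\"onig's-lemma alternative for $\PCat=\ZCat$ quietly uses that the set of adapted bases of each stalk is finite; that holds here because the paper assumes $\kk$ is a finite field, but it would not give Webb's theorem over an arbitrary field, so it is a weaker route than the one in the reference. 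One should also be a little careful with the claim that restricted barcodes ``therefore assemble'': existence and uniqueness of the limiting multiset follows from pointwise finite dimensionality and stabilization of multiplicities, but producing an actual direct-sum decomposition realizing it is precisely the coherence problem you then hand to K\"onig/Crawley-Boevey, so the limiting barcode is not a shortcut around that step.
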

\begin{remark}
Such a decomposition theorem exists only for \emph{very} special choices of $\PCat$. Two other scenarios appearing in applications are \emph{zigzags} \cite{botnan2015interval, carlsson2010zigzag} and exact bimodules \cite{cochoy2016decomposition}. The latter is a specific type of $\RCat^2$-indexed persistence modules. \label{rem:zigzag}
\end{remark}
\begin{corollary}
Let $\PCat=\bigsqcup_{i\in \Lambda} \ZCat$ be the poset given as a disjoint union of $\ZCat$'s (i.e. elements in different components are incomparable). If $M: \PCat\to \Vect$ satisfies $\dim M_p < \infty$ for all $p\in \PCat$, then $M$ is interval decomposable.
\label{cor:disjoint}
\end{corollary}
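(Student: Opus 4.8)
The plan is to deduce this directly from \cref{Structure_Theorem} by working one component at a time. The starting point is that $\PCat=\bigsqcup_{i\in\Lambda}\ZCat$ has no morphisms between distinct components, so a functor $M:\PCat\to\Vect$ is nothing but the family of its restrictions $M^{(i)}:\ZCat\to\Vect$ to the components, with no compatibility required between them; in other words $\Vect^{\PCat}\cong\prod_{i\in\Lambda}\Vect^{\ZCat}$. Extending each $M^{(i)}$ by zero outside the $i$-th component gives a submodule $\widetilde{M^{(i)}}$ of $M$, and since at every point $p\in\PCat$ exactly one of these summands is nonzero, the (possibly infinite) direct sum is defined pointwise and $M\cong\bigoplus_{i\in\Lambda}\widetilde{M^{(i)}}$.

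Next I would apply the structure theorem componentwise. Each $M^{(i)}$ satisfies $\dim M^{(i)}_p=\dim M_p<\infty$ for all $p$, so \cref{Structure_Theorem} gives a barcode $\B_i$ over $\ZCat$ with $M^{(i)}\cong\bigoplus_{\J\in\B_i}I^{\J}$. The only thing to verify is that extension by zero sends interval modules to interval modules: if $\J\subseteq\ZCat$ is an interval and $\J_i\subseteq\PCat$ is the copy of $\J$ sitting in the $i$-th component, then $\J_i$ is nonempty, is convex (any $b$ between two elements of $\J_i$ lies in the same component, hence in $\J$), and is connected in $\PCat$ (a comparability chain in $\ZCat$ is one in $\PCat$), so $\J_i$ is an interval of $\PCat$, and $\widetilde{I^{\J}}=I^{\J_i}$ by inspection of \cref{def:interval}.

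Combining the two decompositions and using associativity of direct sums then yields
\[
M\;\cong\;\bigoplus_{i\in\Lambda}\ \bigoplus_{\J\in\B_i} I^{\J_i},
\]
so $M$ is interval decomposable with $\B(M)=\{\,\J_i : i\in\Lambda,\ \J\in\B_i\,\}$. I do not expect any real obstacle here: the content is entirely in \cref{Structure_Theorem}, and the only points needing a moment's care are that $\Lambda$ may be infinite — handled by the observation that the direct sum is finite at each point — and the routine identification of intervals (and interval modules) over a component with intervals (and interval modules) over $\PCat$.
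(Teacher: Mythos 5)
Your proposal is correct and takes the same route as the paper: apply \cref{Structure_Theorem} to each component of $\PCat$ separately and reassemble. You've simply spelled out the details the paper leaves implicit (the pointwise-finite direct sum over components, and the check that an interval in one copy of $\ZCat$ is an interval in $\PCat$), all of which are accurate.
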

\begin{proof}Apply \cref{Structure_Theorem} to each of the components of $\PCat$ independently. This gives  $\B(M) = \bigsqcup_{i\in \Lambda} \B(M|_{(i,\ZCat)}).$ \end{proof}
At the very core of topological data analysis are the \emph{isometry theorems}. They say that for certain choices of interval decomposable modules, the interleaving distance coincides with a completely combinatorial distance on their associated barcodes. This combinatorial distance $d_B$ is called the \emph{bottleneck distance} and is defined in \cref{sec:bottleneck}. Importantly, for any two barcodes, if the interleaving distance between each pair of interval modules in the barcodes is known, the associated bottleneck distance can be computed by solving a bipartite matching problem. This, in turn, implies that the interleaving distance can be efficiently computed whenever an isometry theorem holds. See \cref{app:vec} for an example. 
\begin{theorem}[Isometry Theorem \cite{lesnick2015theory, chazal2009proximity, bauer2015induced, bjerkevik2016stability}]
\label{teo:isometry}
Suppose $M,N: \ZCat\to \Vect$ satisfy $\dim M_i<\infty$ and $\dim N_i<\infty$ for all $i\in \ZCat$. Then $d_I(M,N) = d_B(\B(M), \B(N))$.
\end{theorem}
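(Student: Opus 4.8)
The plan is to prove the two inequalities $d_I(M,N)\le d_B(\B(M),\B(N))$ and $d_B(\B(M),\B(N))\le d_I(M,N)$ separately. First, by \cref{Structure_Theorem} both $M$ and $N$ are interval decomposable, so we may write $M\cong\bigoplus_{J\in\B(M)}I^J$ and $N\cong\bigoplus_{K\in\B(N)}I^K$; in the $\ZCat$-indexed setting each interval is determined by a left endpoint in $\ZCat\cup\{-\infty\}$ and a right endpoint in $\ZCat\cup\{+\infty\}$, and the bottleneck distance $d_B$ is the least $\delta$ admitting a partial matching $\sigma$ between $\B(M)$ and $\B(N)$ under which every matched pair has both endpoints within $\delta$ of each other and every unmatched interval is short enough to be $\delta$-interleaved with the zero module.

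For the inequality $d_I\le d_B$ (the easy direction), suppose $\sigma$ realizes $d_B(\B(M),\B(N))=\delta$. For each matched pair $(J,K)$ whose endpoints differ by at most $\delta$, the structure maps of $I^J$ and $I^K$, shifted by $\delta$, furnish an explicit $\delta$-interleaving $I^J\leftrightarrow I^K$; the two triangle identities of \cref{def:interleaving} reduce to the observation that a composite of identity structure maps across a gap of $2\delta$ inside $J$ (resp.\ $K$) equals $\phi^{2\delta}_{I^J}$ (resp.\ $\phi^{2\delta}_{I^K}$). For each unmatched interval the prescribed length bound is precisely the condition under which the corresponding interval module is $\delta$-interleaved with $0$. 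Since an arbitrary direct sum of $\delta$-interleavings is again a $\delta$-interleaving, assembling these data over $\sigma$ yields a $\delta$-interleaving $M\leftrightarrow N$, so $d_I(M,N)\le\delta$.

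For the inequality $d_B\le d_I$ (algebraic stability, originally \cite{chazal2009proximity}, the hard direction), suppose $f\colon M\to N(\delta)$ and $g\colon N\to M(\delta)$ form a $\delta$-interleaving; we must produce a $\delta$-matching of barcodes. I would follow the induced-matching method of \cite{bauer2015induced}: first show that any morphism $\phi\colon A\to B$ of pointwise finite-dimensional $\ZCat$-modules factors as $A\twoheadrightarrow\im\phi\hookrightarrow B$, that a monomorphism of such modules induces a canonical injective matching of barcodes under which each matched bar keeps its right endpoint and can only move its left endpoint leftward, and dually for an epimorphism; then apply this to the internal maps $\phi_M^{2\delta}=g(\delta)\circ f$ and $\phi_N^{2\delta}=f(\delta)\circ g$. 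The matching induced by $\phi_M^{2\delta}$ is the explicit ``shift by $2\delta$'' matching on $\B(M)$, which matches a bar to itself exactly when that bar has length at least $2\delta$; comparing it against the composite of the matchings induced by $f$ and by $g(\delta)$ forces every sufficiently long bar of $M$ to be matched by $f$ to a bar of $N$ with endpoints within $\delta$, and symmetrically for $N$. Combining the partial matchings coming from $f$ and from $g$ and checking that the remaining short bars can be left unmatched within the allowed cost gives the desired $\delta$-matching, whence $d_B(\B(M),\B(N))\le\delta$. Alternatively one can run Bjerkevik's argument \cite{bjerkevik2016stability}: form the bipartite graph on bars of $M$ and $N$ joined when their endpoints are within $\delta$, augmented by slack vertices for short bars, and verify Hall's condition by bounding, for each set $S$ of bars of $M$, the rank of $\phi_M^{2\delta}$ on the corresponding summand via ranks of $f$ and $g$ on suitable staircase regions, the identity $g(\delta)\circ f=\phi_M^{2\delta}$ supplying exactly the needed rank inequality.

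The main obstacle is the second inequality. The difficulty is that an interleaving is a pair of maps with no a priori compatibility with the interval decompositions, so one cannot read off a matching directly; the real content is to extract enough linear-algebraic control from the commuting triangles — through images and kernels and induced matchings, or through rank inequalities and Hall's theorem — and, crucially, to obtain the constant exactly $\delta$ rather than a larger multiple, which is what separates a correct proof from naive attempts. By comparison the first inequality and the structure-theorem reduction are routine bookkeeping.
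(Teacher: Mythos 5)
The paper does not prove this theorem; it is quoted as a known result, and the four references it cites are exactly the ones your sketch reconstructs. Your outline — interval decomposition via \cref{Structure_Theorem}, the easy direction by assembling a $\delta$-interleaving from the matched pairs and from $\delta$-interleavings with zero on the unmatched short bars, and the hard direction (algebraic stability with the tight constant $\delta$) via either the Bauer--Lesnick induced-matching lemma for monos/epis applied to $\phi_M^{2\delta}=g(\delta)\circ f$ and $\phi_N^{2\delta}=f(\delta)\circ g$, or Bjerkevik's Hall's-theorem argument with rank bounds — is a faithful and correct summary of how the cited sources establish the result.
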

\begin{remark}
Continuing on the remark to \cref{Structure_Theorem}. An isometry theorem also holds for zigzags and exact bimodules \cite{botnan2016algebraic,bjerkevik2016stability}. Although there might be other classes of interval decomposable modules for which an isometry theorem holds, the result is not true in general. See \cref{app:unstable} for an example of interval decomposable modules in $\ZCat^2$ for which $2d_I(M,N) = d_B(\B(M), \B(N))$, and see \cite{botnan2016algebraic} for a general conjecture. This shows that a matching of the barcodes will not determine the interleaving distance even in the case of very well-behaved modules. 
\end{remark}
\section{NP-completeness}
\label{sec:NPHard}
In this section we shall prove that it is NP-hard to decide if two modules $M,N\in \Vect^{\ZCat^{L\to C}}$ are 1-interleaved. Recall that for two sets $L$ and $C$, we define $\ZCat^{L\to C}$ to be the disjoint union $\bigsqcup_{l\in L, c\in C} \ZCat$ with the added relations $(l,t) < (c,t)$ for all $l\in L, c\in C$, and $t\geq 3$. Define the \emph{$u$-shift functor} $(-)(u): \C^{\ZCat^{L\to C}} \to \C^{\ZCat^{L\to C}}$ on objects by $M(u)_{(p,t)} = M_{(p,t+u)}$, together with the obvious internal morphisms, and on morphisms $f: M\to N$ by $f(u)_{(p,t)} = f_{(p,t+u)}: M(u)_{(p,t)} \to N(u)_{(p,t)}$. That is, the shift functor simply acts on each of the components independently. With the shift-functor defined, we define a $\delta$-interleaving of $\ZCat^{L\to C}$-indexed modules precisely as in \cref{sec:interleaving}. Thus, we see that a $\delta$-interleaving is simply a collection of $\delta$-interleavings over each disjoint component of $\ZCat$ which satisfy the added relations. Indeed, a $1$-interleaving is equivalent to the existence of dashed morphisms in the following diagram for all $l\in L$ and $c\in C$:
$$
\begin{tikzcd}
\vdots & \vdots  & & \vdots  & \vdots \\
M_{(l,5)}\ar[rrr, bend left=20]\ar[u]\ar[ur, dashed] & N_{(l, 5)}\ar[rrr, bend left=20]\ar[ul, dashed]\ar[u] & & M_{(c, 5)}\ar[ur, dashed]\ar[u] & \ar[u]N_{(c,5)}\ar[ul, dashed] \\
M_{(l, 4)}\ar[ur, dashed]\ar[rrr, bend left=20]\ar[u] & N_{(l, 4)}\ar[u]\ar[rrr, bend left=20]\ar[ul, dashed]  & & M_{(c, 4)}\ar[ur, dashed]\ar[u] & N_{(c,4)}\ar[u]\ar[ul, dashed]  \\
M_{(l, 3)}\ar[ur, dashed]\ar[rrr, bend left=20]\ar[u] & N_{(l, 3)}\ar[u]\ar[rrr, bend left=20]\ar[ul, dashed]  & & M_{(c, 3)}\ar[ur, dashed]\ar[u] & N_{(c,3)}\ar[u]\ar[ul, dashed]  \\
M_{(l, 2)}\ar[ur, dashed]\ar[u] & N_{(l, 2)}\ar[u]\ar[ul, dashed]  & & M_{(c, 2)}\ar[ur, dashed]\ar[u] & N_{(c,2)}\ar[u] \ar[ul, dashed] \\
\vdots\ar[ur, dashed]\ar[u] & \vdots\ar[u]\ar[ul, dashed] & & \vdots\ar[ur, dashed]\ar[u] & \vdots\ar[u] \ar[ul, dashed]
\end{tikzcd}
$$
We saw in \cref{cor:disjoint} that $M: \ZCat^{L\to \emptyset}\to \Vect$ is interval decomposable. By applying \cref{teo:isometry} to each disjoint component independently, the following is easy to show. Here the bottleneck distance is generalized in the obvious way, i.e. matching each component independently. 
\begin{corollary}[Isometry Theorem for Disjoint Unions]
Let $L$ be any set, and $M,N: \ZCat^{L\to \emptyset} \to \Vect$ such that $\dim M_p<\infty$ and  $\dim N_p <\infty$ for all $p\in \ZCat^{L\to \emptyset}$. Then $d_I(M,N) = d_B(\B(M), \B(N))$. 
\end{corollary}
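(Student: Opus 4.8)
The plan is to reduce the statement to the one-dimensional isometry theorem \cref{teo:isometry}, applied separately to each component, in exactly the spirit of \cref{cor:disjoint}. Since $\ZCat^{L\to\emptyset} = \bigsqcup_{l\in L}\ZCat$ is literally a disjoint union of copies of $\ZCat$, a functor $M\colon \ZCat^{L\to\emptyset}\to\Vect$ is the same datum as a family of functors $M^l := M|_{(l,\ZCat)}\colon \ZCat\to\Vect$, and $\dim M^l_i = \dim M_{(l,i)} < \infty$ for all $i$. First I would invoke \cref{Structure_Theorem} to write each $M^l$ (and each $N^l$) as an interval decomposable module, so that $\B(M) = \bigsqcup_{l\in L}\B(M^l)$ with each bar carrying its component label, and similarly for $N$.

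The two facts I would then establish are: (i) $d_I(M,N) = \sup_{l\in L} d_I(M^l,N^l)$, and (ii) $d_B(\B(M),\B(N)) = \sup_{l\in L} d_B(\B(M^l),\B(N^l))$. For (i), note that the $u$-shift functor on $\C^{\ZCat^{L\to\emptyset}}$ acts on each component independently, and---crucially, since $C=\emptyset$ there are no added order relations---the equations $g(\delta)\circ f = \phi_M^{2\delta}$ and $f(\delta)\circ g = \phi_N^{2\delta}$ of \cref{def:interleaving} are checked pointwise and hence split over the components; thus a $\delta$-interleaving between $M$ and $N$ is precisely a family of $\delta$-interleavings between $M^l$ and $N^l$, one for each $l\in L$. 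So $M$ and $N$ are $\delta$-interleaved if and only if every pair $M^l,N^l$ is, which gives (i). Fact (ii) is immediate from the definition of the generalized bottleneck distance: a component-respecting matching of $\B(M)$ with $\B(N)$ is a family of matchings of $\B(M^l)$ with $\B(N^l)$, and its cost is the supremum of the componentwise costs. Applying \cref{teo:isometry} to each pair $M^l,N^l\colon\ZCat\to\Vect$ gives $d_I(M^l,N^l) = d_B(\B(M^l),\B(N^l))$, and taking the supremum over $l$ together with (i) and (ii) yields $d_I(M,N) = d_B(\B(M),\B(N))$.

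The only step needing genuine care---and the one I expect to be the main (if mild) obstacle---is the componentwise splitting of interleavings used for (i): one must check both that a global interleaving restricts to componentwise interleavings and, conversely, that an arbitrary family of componentwise interleavings (possibly with infinitely many nonzero components, since only pointwise finiteness of $M$ and $N$ is assumed) reassembles into a bona fide pair of natural transformations $M\to N(\delta)$ and $N\to M(\delta)$. Both directions follow formally from the componentwise description of the shift functor and the pointwise nature of the interleaving identities, and this is exactly the place where the hypothesis $C=\emptyset$ enters: in the presence of the relations $(l,t)<(c,t)$ the naturality across components would impose extra constraints and the argument would break.
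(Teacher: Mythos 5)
Your proposal is correct and matches the paper's approach exactly: the paper gives no separate proof of the corollary beyond the remark that one applies \cref{teo:isometry} to each disjoint component independently and matches components independently for the bottleneck distance. You have simply spelled out the "easy to show" step the paper leaves implicit, namely that both $d_I$ and $d_B$ decompose as suprema over components precisely because $C=\emptyset$ removes all cross-component relations.
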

In particular, the interleaving distance between $M$ and $N$ can be effectively computed through a bipartite matching. As we shall see, this is not true for $C\neq \emptyset$.  The remainder of this section is devoted to proving the following theorem:
\begin{theorem}
\label{thm:NPHARD}
Unless P=NP, there exists no algorithm, polynomial in $n=\dim M + \dim N$, which decides  if $M,N: \ZCat^{L\to C}\to \Vect_{\Z/2\Z}$ are $1$-interleaved.
\end{theorem}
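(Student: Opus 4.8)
The plan is to reduce 3-SAT to the $1$-interleaving problem for $\ZCat^{L\to C}$-indexed modules over $\Z/2\Z$. Given a 3-SAT instance $\phi$ with clauses $C$ over literals/variables $L$, I would build modules $M,N: \ZCat^{L\to C}\to \Vect_{\Z/2\Z}$ so that the freedom in choosing the dashed interleaving morphisms on each $\ZCat$-component corresponds exactly to assigning a truth value to each variable, and the constraints coming from the added relations $(l,t)<(c,t)$ for $t\geq 3$ force, at each clause-component, that at least one of its literals is satisfied. Concretely, each literal component $(l,\ZCat)$ would carry $M$ and $N$ built so that, in the degrees $t=2,3,4,5$ appearing in the commuting ladder above, a $1$-interleaving over that component alone always exists, but it comes in (at least) two ``flavors'' — a $\Z/2\Z$-worth of choice in one of the off-diagonal maps — encoding $\mathrm{true}$ vs.\ $\mathrm{false}$. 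The clause component $(c,\ZCat)$ would be rigged so that its interleaving, restricted to the degrees $\geq 3$ where it receives maps from its three literal components via $\phi_M(a,b)$ and $\phi_N(a,b)$, is solvable precisely when not all three incoming choices are the ``wrong'' one.

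The key steps, in order: (1) Fix the gadget on a single $\ZCat$-component. I would take small vector spaces (dimension $1$ or $2$) concentrated in a bounded range of indices, with the interval-module picture of \cref{teo:isometry} in mind so I can see exactly which $1$-interleavings exist and parameterize them. The point is to produce a component whose set of $1$-interleaving morphisms $(f,g)$ is nonempty and naturally bijective with $\Z/2\Z$ (or contains a distinguished $\Z/2\Z$ of choices that propagate to degree $\geq 3$). (2) Wire variables to clauses: for a clause $c = \ell_1 \vee \ell_2 \vee \ell_3$, only the three literal components $\ell_1,\ell_2,\ell_3$ are declared $<(c,t)$ for $t\geq 3$; the commutativity of the big diagram then forces the interleaving data on $(c,\ZCat)$ in degrees $\geq 3$ to be compatible simultaneously with the three literal choices. (3) Design the clause gadget so that the system of linear equations over $\Z/2\Z$ imposed by these three compatibility conditions plus the interleaving identities $g(1)\circ f = \phi_M^{2}$, $f(1)\circ g=\phi_N^{2}$ is solvable iff the boolean OR of the three literal values is true — i.e.\ it is unsolvable exactly in the single ``all false'' configuration, mirroring the one falsifying assignment of a clause. (4) Handle consistency: ensure a variable and its negation cannot both be set favorably, either by using the same literal component for $x$ and $\bar x$ with the two flavors genuinely opposite, or by adding a small ``consistency clause'' gadget. (5) Argue polynomial size: the construction uses $O(|L|+|C|)$ components each of bounded total dimension, so $n=\dim M+\dim N = O(|L|+|C|)$ and the reduction is polynomial; combined with membership in NP (noted in the introduction), this gives NP-completeness. (6) Prove correctness in both directions: a satisfying assignment yields a global $1$-interleaving by choosing flavors accordingly and patching, and conversely a global $1$-interleaving restricts on each literal component to one of the two flavors, defining an assignment that the clause gadgets certify is satisfying.

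The main obstacle I expect is step (3): engineering a $\ZCat$-indexed $\Z/2\Z$-gadget whose $1$-interleaving space realizes exactly an OR constraint on three binary inputs, while keeping the inputs ``independent'' until they meet at the clause. The subtlety is that a $1$-interleaving over a single $\ZCat$-component is quite rigid — it is governed by the bottleneck matching of the two barcodes — so I must choose barcodes where a bar of length (roughly) $2$ can be matched to the trivial module in two distinct ways, and then ensure those two ways are distinguishable after applying the internal maps $\phi_M(a,a+1)$ that feed into degree $\geq 3$. Making the clause component's equations collapse to unsolvability in precisely one of the eight input patterns, rather than, say, a parity condition (unsolvable in four patterns), is where the linear algebra over $\Z/2\Z$ has to be done carefully; I would likely model each literal's contribution as adding a fixed vector $v_{\ell}$ (with $v_{\mathrm{true}}\neq v_{\mathrm{false}}$) to a target, and arrange the clause gadget to be solvable iff the spanned configuration avoids one forbidden point, which over $\Z/2\Z$ can be achieved by an affine rather than linear obstruction. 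A secondary nuisance is bookkeeping the many commuting squares in the displayed ladder and confirming no unintended constraint links different clauses through a shared literal component; isolating each literal in its own component (with negation handled by a separate consistency gadget) is the cleanest way to avoid that.
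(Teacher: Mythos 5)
Your plan matches the paper's proof in all essential respects: you reduce from 3-SAT, build one $\ZCat$-component per literal whose $1$-interleaving admits a $\Z/2\Z$-worth of choice in an off-diagonal map (the paper arranges $\phi_{x} + \phi_{\neg x} = 1$ via a diagonal map $\kk \to \kk^2$, so the two flavors of a literal are automatically opposite — your ``same component, opposite flavors'' option, making a separate consistency gadget unnecessary), and build a clause component whose interleaving condition $Z_2^i \circ Z_1^i = 1$ is an affine obstruction that fails exactly when all three incoming $\phi$-values are $0$. The one small inaccuracy is the claim that a $1$-interleaving over a single $\ZCat$-component is rigid because it is ``governed by the bottleneck matching''; the isometry theorem governs \emph{existence} of an interleaving, not uniqueness of the interleaving \emph{morphisms}, and the proof hinges precisely on that extra freedom — which your subsequent ``two distinct ways to match a short bar'' remark shows you in fact understand.
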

\subsection{The Proof}
We shall prove \cref{thm:NPHARD} by a reduction from 3-SAT. Let $\psi$ be a boolean formula in 3-CNF defined on literals $L = \{x_1, x_2, \ldots, x_{n_l}\}$ and clauses $C= \{c_1, c_2, \ldots, c_{n_c}\}$. We shall assume that the literals of each clause are distinct and \emph{ordered}. That is, the clause $c_i$ is specified by the three distinct literals $\{x_{i_1}, x_{i_2}, x_{i_3}\}$ wherein $i_1 < i_2 < i_3$. Determining if $\psi$ is satisfiable is well-known to be NP-complete. For the entirety of the proof $\kk = \Z/2\Z$.

\textbf{Step 1: Defining the representations.} Associate to $\psi$ two functors $M, N: \ZCat^{L\to C} \to \Vect_{\Z/2\Z}$ in the following way:
For all literals $x_j\in L$ define  
$$
\begin{tikzcd}
M_{(x_j,1)} \ar[r] &M_{(x_j,2)} \ar[r] &M_{(x_j,3)} \ar[r] &M_{(x_j,4)}=  \kk\ar[r, "1"] & \kk\ar[r, "1"] & \kk\ar[r, "1"] & \kk \\
N_{(x_j,1)} \ar[r] &N_{(x_j,2)} \ar[r] &N_{(x_j,3)} \ar[r] &N_{(x_j,4)} = \kk\ar[r, "1"] & \kk\ar[r, "(1;1)"] & \kk^2\ar[r] & 0
\end{tikzcd}
$$ 
and for every clause $c_i$ in $\psi$ define 
$$
\begin{tikzcd}
M_{(c_i,1)} \ar[r] &M_{(c_i,2)} \ar[r] &M_{(c_i,3)} \ar[r] &M_{(c_i,4)}=  0\ar[r] & \kk\ar[r, "1"] & \kk\ar[r, "1"] & \kk \\
N_{(c_i,1)} \ar[r] &N_{(c_i,2)} \ar[r] &N_{(c_i,3)} \ar[r] &N_{(c_i,4)} = 0\ar[r] & \kk^3\ar[r, "1"] & \kk^3 \ar[r] & 0
\end{tikzcd}
$$
For any other $p\in \ZCat^{L\to C}$, $M_p = N_p = 0$. Next we specify the remaining non-trivial morphisms: let $c_i=z_{i_1}\vee z_{i_2} \vee z_{i_3}$ be a clause in $\psi$, where $z_{j_l} = x_{j_l}$ or $z_{j_l} = \neg x_{j_l}$, and for which $i_1 < i_2 < i_3$. For $s=1,2,3$ define $H_s: \kk^2 \to \kk^3$ by $e_1 \to u\cdot e_s$ and $e_2 \to (1-u)\cdot e_s$, where $u=1$ if $z_{i_s} = x_{i_s}$ and $u=0$ if $z_{i_s} = \neg x_{i_s}$.  Here $e_d$ is the $d$-th standard basis vector of $\kk^3$. Given this we define the following for $s=1,2,3$: 
$$
\begin{tikzcd}
M_{(x_{i_s},4)}\ar[r] & M_{(c_i, 4)} & = & \kk\ar[r, "1"] & \kk \\
M_{(x_{i_s}, 3)}\ar[r]\ar[u] & M_{(c_i, 3)}\ar[u] & = & \kk\ar[r, "1"]\ar[u, "1"] & \kk\ar[u, "1"]
\end{tikzcd}
$$
and 
$$
\begin{tikzcd}
N_{(x_{i_s}, 3)}\ar[r] & N_{(c_i, 3)} =  \kk^2\ar[r, "H_s"]& \kk^3.
\end{tikzcd}
$$
Clearly $\dim M+\dim N=\Op(n_c+n_l)$. Thus, the total dimension is polynomial in the input size of 3-SAT. 

\textbf{Step 2: Showing the reduction.}
Observe that $M$ and $N$ are 1-interleaved if and only if there exist dashed morphisms such that the below diagram is commutative for every literal $x_{i_s}$ and for every clause $c_i$ containing $x_{i_s}$:
\begin{center} 
\begin{equation}
\begin{tikzcd}
M_{(x_{i_s},5)} = 0&0=N_{(x_{i_s},5)} & & M_{(c_i,5)}=0& 0=N_{(c_i,5)}\\
M_{(x_{i_s},4)} =  \kk\ar[ur, dashed]\ar[rrr, bend left=20]\ar[u] & 0=N_{(x_{i_s},4)} \ar[u]\ar[rrr, bend left=20]\ar[ul, dashed]  & & M_{(c_i,4)}=\kk\ar[ur, dashed]\ar[u] & 0=N_{(c_i,4)}\ar[u]\ar[ul, dashed]  \\
M_{(x_{i_s},3)} = \kk\ar[ur, dashed]\ar[rrr, bend left=20]\ar[u] & \kk^2=N_{(x_{i_s},3)}\ar[u]\ar[rrr, swap, bend left=20, "H_s"]\ar[ul, dashed]  & & M_{(c_i,3)}=\kk\ar[ur, dashed]\ar[u] & \kk^3=N_{(c_i,3)}\ar[u]\ar[ul, dashed] \\
M_{(x_{i_s},2)} = \kk\ar[ur, dashed]\ar[u] & \kk=N_{(x_{i_s},2)}\ar[u, swap,"(1;1)"]\ar[ul, dashed]  && M_{(c_i,2)}=\kk\ar[ur, dashed]\ar[u] & \kk^3=N_{(c_i,2)}\ar[u, swap,"I_3"] \ar[ul, dashed] \\
M_{(x_{i_s},1)} = \kk\ar[ur, dashed]\ar[u] & \kk =N_{(x_{i_s},1)}\ar[u]\ar[ul, dashed] & & M_{(c_i,1)}=0\ar[u] & 0=N_{(c_i,1)} \ar[u]
\end{tikzcd}
\label{eq:diagramProof}
\end{equation}
\end{center}
We shall see there are few degrees of freedom in the choice of interleaving morphisms. Indeed, consider the left part of the above diagram:
$$
\begin{tikzcd}
M & 0\ar[r] & \kk\ar[r, "1"] \ar[dr, thick]& \kk\ar[r, "1"]\ar[dr, thick] & \kk\ar[r, "1"]\ar[dr, thick] & \kk\ar[r, "1"]\ar[dr, thick] & 0 \\
N & 0\ar[r]& \kk\ar[r, swap, "1"]\ar[ur, thick]& \kk \ar[r, swap, "(1;1)"]\ar[ur, thick] & \kk^2 \ar[r]\ar[ur, dashed] & 0 \ar[r]\ar[ur, thick] & 0
\end{tikzcd}
$$
We leave it to the reader to verify that if $M$ and $N$ are 1-interleaved, then all the solid diagonal morphisms in the above diagram are completely determined by commutativity. For the dashed morphism $(\phi_{x_{i_s}}, \phi_{\neg x_{i_s}}): \kk^2 \to \kk$ there are two choices: by commutativity it must satisfy $(\phi_{x_{i_s}}, \phi_{\neg x_{i_s}})\cdot (1;1)= 1$ and thus $\phi_{x_{i_s}} + \phi_{\neg x_{i_s}}=1$. As $\kk = \Z/2\Z$, this implies that precisely one of $\phi_{x_{i_s}}$ and $\phi_{\neg x_{i_s}}$ is multiplication by 1. This corresponds to a choice of truth value for $x_{i_s}$: $\phi_{x_{i_s}} = 1 \iff x_{i_s}={\rm True}$ and $ \phi_{\neg x_{i_s}} = 1 \iff x_{i_s}={\rm False}$. Next, consider the right part of \ref{eq:diagramProof}:
$$
\begin{tikzcd}
M & 0\ar[r] &\kk \ar[r, "1"]\ar[dr, dashed]  &  \kk\ar[r, "1"]\ar[dr, thick]  & \kk \ar[r]\ar[dr, thick]  & 0\\
N & 0\ar[r] & \kk^3 \ar[r, swap, "1"]\ar[ur, dashed] & \kk^3\ar[r]\ar[ur, dashed]\ar[ur, dashed] & \ar[r]\ar[ur, thick] 0 & 0
\end{tikzcd}
$$
There are three non-trivial morphisms, out of which two are equal by commutativity. Let $Z_1^i: \kk\to \kk^3$ and $Z_2^i: \kk^3 \to \kk$ denote the two unspecified morphisms.  Returning to (\ref{eq:diagramProof}), we see that $Z_2^i$ must satisfy the following for $s\in \{1,2,3\}$:
$$
\begin{tikzcd}
\kk\ar[rr, "1"] & & \kk &  \\
& \kk^2\ar[ul, "{(\phi_{x_{i_s}},\phi_{\neg x_{i_s}})}"]\ar[rr, "H_s"]  & & \kk^3\ar[ul, dashed, swap, "Z_2^i"]
\end{tikzcd}
$$
Thus, $Z^i_2$ restricted to its $s$-th component equals either $\phi_{x_{i_s}}$ or $\phi_{\neg x_{i_s}}$, depending on whether $x_{i_s}$ or its negation $\neg x_{i_s}$ appears in the clause $c_i$. This implies that $Z_2^i$ is given by
$$
Z_2^i=
\begin{bmatrix}
\phi_{z_{i_1}} &  \phi_{z_{i_2}} & \phi_{z_{i_3}}
\end{bmatrix}
$$
Hence, if $M$ and $N$ are to be 1-interleaved, then there are no degrees of freedom in choosing $Z_2^i$ after the $\phi_{x_{i_s}}$ are specified. However, $Z_1^i $ only needs to satisfy  $Z_2^i\circ Z_1^i = 1.$ As this is the sole restriction imposed on $Z_1^i$, we see that this can be satisfied if and only if $Z_2^i \neq 0$, which is true if and only if $z_{i_s} = {\rm True}$ for at least one $s\in\{1,2,3\}$. 
\begin{theorem}
\label{teo:NPHard}
Let $\psi$ be a boolean formula as above. Then $\psi$ is satisfiable if and only if the associated persistence modules $M, N: \ZCat^{L\to C} \to \vect$ are 1-interleaved. 
\end{theorem}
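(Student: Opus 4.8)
The plan is to prove the two implications separately, organizing the strand‑by‑strand analysis of diagram~\eqref{eq:diagramProof} carried out above into a single equivalence. Throughout I use the observation already recorded, that a $1$-interleaving of $M$ and $N$ is exactly a choice of dashed morphisms making~\eqref{eq:diagramProof} commute for every clause $c_i$ and every literal $x_{i_s}$ occurring in it. Since $M$ and $N$ are supported on the disjoint $\ZCat$-strands of $\ZCat^{L\to C}$, which are linked only by the relations $(l,t)<(c,t)$, the interleaving morphisms can be controlled one strand at a time and then checked for compatibility with the cross-relations; this is precisely the bookkeeping done in Step~2.

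For the "only if" direction, suppose $(f,g)$ is a $1$-interleaving. On the literal strand of $x_j$, the chase already performed shows every interleaving morphism is forced by commutativity except the map $g\colon N_{(x_j,\cdot)}=\kk^2\to M_{(x_j,\cdot)}=\kk$, which has the form $(\phi_{x_j},\phi_{\neg x_j})$ and must satisfy $\phi_{x_j}+\phi_{\neg x_j}=1$; as $\kk=\Z/2\Z$, exactly one of $\phi_{x_j},\phi_{\neg x_j}$ equals $1$, so I define a truth assignment $\nu$ by declaring $x_j$ true precisely when $\phi_{x_j}=1$. On the clause strand of $c_i$ the only residual freedom is a pair $Z_1^i\colon\kk\to\kk^3$, $Z_2^i\colon\kk^3\to\kk$ with $Z_2^i\circ Z_1^i=1$, while the naturality squares attached to the relations $(x_{i_s},t)<(c_i,t)$ — the ones built from the maps $H_s$ — force $Z_2^i=\bigl[\phi_{z_{i_1}}\ \phi_{z_{i_2}}\ \phi_{z_{i_3}}\bigr]$. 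Hence $Z_1^i$ exists if and only if $Z_2^i\neq 0$, i.e.\ some $\phi_{z_{i_s}}=1$, i.e.\ $\nu$ makes the literal $z_{i_s}$ true; since this holds for every $i$, $\nu$ satisfies $\psi$.

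For the "if" direction, start from a satisfying assignment $\nu$, set $\phi_{x_j}=1$ when $\nu(x_j)=\mathrm{True}$ and $\phi_{x_j}=0$ otherwise, and $\phi_{\neg x_j}=1-\phi_{x_j}$. On each literal strand use the forced morphisms together with this choice of $(\phi_{x_j},\phi_{\neg x_j})$; on each clause strand put $Z_2^i=\bigl[\phi_{z_{i_1}}\ \phi_{z_{i_2}}\ \phi_{z_{i_3}}\bigr]$, which is nonzero because $\nu$ satisfies $c_i$, and let $Z_1^i$ be any right inverse of $Z_2^i$ (for instance $e_s$ for an $s$ with $\phi_{z_{i_s}}=1$); all remaining components of $f$ and $g$ are $0$. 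It then remains to check that this collection of maps is a genuine $1$-interleaving: that $f$ and $g$ are natural transformations $M\to N(1)$ and $N\to M(1)$ — in particular that the squares built from the cross-maps $H_s$ and their functorial consequences commute — and that $g(1)\circ f=\phi_M^{2}$ and $f(1)\circ g=\phi_N^{2}$. Away from the finitely many nodes carrying $\kk^2$ or $\kk^3$ every map in sight is an identity or zero, so these conditions collapse to the relations $\phi_{x_j}+\phi_{\neg x_j}=1$, $\ Z_2^i=\bigl[\phi_{z_{i_1}}\ \phi_{z_{i_2}}\ \phi_{z_{i_3}}\bigr]$, $\ (\phi_{x_{i_s}},\phi_{\neg x_{i_s}})=Z_2^i\circ H_s$, and $Z_2^i\circ Z_1^i=1$, all of which are arranged to hold by construction.

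I expect the delicate point to be exactly this last verification in the "if" direction. The supports of $M$ and $N$ on each strand overlap by precisely the amount a $1$-interleaving can absorb, so one has to track indices carefully to confirm that the $2\delta$-shifted identities $g(1)f=\phi_M^2$ and $f(1)g=\phi_N^2$ impose nothing beyond the listed relations, and that the cross-relation naturality squares through $H_s$ are the only place where the clause data and the literal data interact. Everything else is routine: the construction has $\dim M+\dim N=\Op(n_c+n_l)$, hence is polynomial in the input size of $3$-SAT, and the $1$-interleaving problem is in NP as noted earlier. Combining these, \cref{teo:NPHard} holds, and with it the NP-hardness asserted in \cref{thm:NPHARD}.
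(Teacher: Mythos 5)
Your proposal is correct and follows essentially the same route as the paper: both arguments reduce the existence of a $1$-interleaving to the existence of choices $(\phi_{x_{i_s}},\phi_{\neg x_{i_s}})$ with $\phi_{x_{i_s}}+\phi_{\neg x_{i_s}}=1$ making every $Z_2^i=\bigl[\phi_{z_{i_1}}\ \phi_{z_{i_2}}\ \phi_{z_{i_3}}\bigr]$ nonzero, and then identify these choices with satisfying truth assignments. Your explicit verification in the ``if'' direction that the constructed maps are natural and satisfy the interleaving identities is exactly the check the paper leaves to the reader, so you have merely made the same proof slightly more detailed.
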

\begin{proof}
Summarizing the above: we have that $M$ and $N$ are 1-interleaved if and only if we can choose morphisms $(\phi_{x_{i_s}}, \phi_{\neg x_{i_s}})$ such that $Z_2^i \neq 0$ for all clauses $c_i$. This means precisely that we can choose truth values for each $x_{i_s}$ such that every clause $c_i = z_{i_1} \vee z_{i_2} \vee z_{i_3}$ evaluates to true. This shows that a $1$-interleaving implies that $\psi$ is satisfiable. Conversely, if $\psi$ is satisfiable, then we see that the morphisms defined by $\phi_{x_{i_s}} = 1 \iff x_{i_s} = {\rm True}$ and $\phi_{\neg x_{i_s}} = 1 \iff x_i={\rm False}$ satisfy $Z_2^i\neq 0$ for every clause $c_i$. Thus, $M$ and $N$ are $1$-interleaved. 
\end{proof}
\begin{remark}
Let $i:\PCat \hookrightarrow \QCat$ be an inclusion of posets and $M: \PCat\to \Vect$. There are multiple functorial ways of extending $M$ to a representation $E(M): \QCat\to \Vect$, e.g. by means of left or right Kan extensions. This is a key ingredient in one of the more recent proofs of \cref{teo:isometry}; see \cite{bubenik2017higher} for details. However, if we impose the condition that $E(M)\circ i \cong M$ then such an extension need not exist. Indeed, \cref{teo:NPHard} implies that the associated decision problem is NP-complete. 
\end{remark}

\section{Interleavings of Multidimensional Persistence Modules}
\label{sec:multid}
Recall that a constrained invertibility (CI) problem is a triple $(P,Q,n)$ where $P$ and $Q$ are subsets of $\{1,2, \dots, n\}^2$, and that a CI-problem is solvable if there exists an invertible $n \times n$ matrix $M$ such that $M_{(i,j)} = 0$ for all $(i,j) \in P$ and $M^{-1}_{(i',j')} = 0$ for all $(i',j') \in Q$. We shall show that a CI-problem is solvable if and only if a pair of associated persistence modules $\ZCat^2\to \Vect$ is 1-interleaved. Hence, if deciding solvability is NP-hard, then so is computing the interleaving distance for multidimensional persistence modules.
\begin{example}
\label{ex:PQ}
Let $P = \{(2,2), (3,3) \}, Q = \{(2,3), (3,2) \} \subset \{1,2,3\}^2$. Then $(P,Q,3)$ is solvable by
$$
M =
\begin{bmatrix}
1&1&1\\
1&0&1\\
1&1&0
\end{bmatrix},
\quad
M^{-1} =
\begin{bmatrix}
-1&1&1\\
1&-1&0\\
1&0&-1
\end{bmatrix}.
$$
\end{example}
\begin{example}
Let $P = \{(1,1), (1,3) \}, Q = \{(2,1) \} \subset \{1,2,3\}^2$. Then $(P,Q,3)$ is not solvable, as $(MN)_{(1,1)} = 0$ for all $3 \times 3$-matrices $M,N$ with $M_{(1,1)} = M_{(1,3)} = N_{(2,1)} = 0$. Note that it matters that we view $P$ and $Q$ as subsets of $\{1,2,3\}^2$ and not of $\{1, \dots, n\}^2$ for some $n > 3$, in which case $(P,Q)$ would be solvable.
\end{example}

\begin{example}
Observe that a CI-problem $(P, \emptyset, n)$ reduces to a bipartite matching problem. Build a graph $G$ on $2n$ vertices $\{v_1, \ldots, v_n, u_1, \ldots, u_n\}$ with an edge from $v_i$ to $u_j$ if $(i,j)\notin P$.  Then the CI-problem is solvable if and only if there exists a perfect matching of $G$. 
\end{example}
A CI-problem can be seen as a problem of choosing weights for the edges in a directed simple graph: Given $(P,Q,n)$, let $G$ be the bipartite directed simple graph with vertices $\{u_1, \dots, u_n, v_1, \dots, v_n\}$, an edge from $u_i$ to $v_j$ if $(i,j) \notin P$, and an edge from $v_j$ to $u_i$ if $(j,i) \notin Q$. Solving $(P,Q,n)$ is then equivalent to weighting the edges in $G$ with elements from $\kk$ so that $$\sum_{j=1}^{n} w(u_i,v_j) w(v_j,u_i) = 1$$ for all $i$, and $$\sum_{j=1}^{n} w(u_i,v_j) w(v_j,u_{i'}) = 0$$ for all $i \neq i'$, where $w(u,v)$ is the weight of the edge from $u$ to $v$ if there is one, and $0$ if not. If the weights are elements of $\Z/2\Z$, this is equivalent to picking a subset of the edges such that there is an odd number of paths of length two from any vertex to itself and an even number of paths of length two from any vertex to any other vertex.

Fix a CI-problem $(P,Q,n)$ and let $m = |P| + |Q|$. We will construct $\ZCat^2$-indexed modules $M$ and $N$ that are $1$-interleaved if and only if $(P,Q,n)$ is solvable, and that are zero outside a grid of size $(2m+3) \times (2m+3)$ in $\ZCat^2$. The dimension of each vector space $M_{(a,b)}$ or $N_{(a,b)}$ is bounded by $n$, so the total dimensions of $M$ and $N$ are polynomial in $n$.

For $p \in \ZCat^2$, let $\langle p \rangle = \{ q \in \ZCat^2 \mid p \leq q \leq (2m+2,2m+2) \}$. Let $\W$ be the interval $\bigcup_{k=0}^m \langle (2m-2k,2k) \rangle$, and for $i \in \{1, 2, \dots, m\}$, let $x_i = (2m-2i+1, 2i-1)$; see \cref{Wsupport}.

Write $P = \{(p_1,q_1), \dots, (p_r,q_r) \}$ and $Q = \{(p_{r+1},q_{r+1}),\dots, (p_m,q_m) \}$. We define $M = \bigoplus_{i=1}^n I^{\I_i}$ and $N = \bigoplus_{i=1}^n I^{\J_i}$, where $\I_i$ and $\J_i$ are constructed as follows: let $\I_i^0 = \J_i^0 = \W$ for all $i$. For $k = 1, 2, \dots, r$, let

\begin{minipage}{.5\linewidth}
\[
\I_{i}^k =
\begin{cases}
\I_{i}^{k-1} \cup \langle x_k - (1,1)\rangle, & \text{ if } i=p_k \\
\I_i^{k-1} \cup \langle x_k \rangle, & \text{ if } i\neq p_k
\end{cases},
\]
\end{minipage}%
\begin{minipage}{.4\linewidth}
\[\J_{i}^k =
\begin{cases}
\J_{i}^{k-1}, & \text{ if } i=q_k \\
\J_i^{k-1} \cup \langle x_k \rangle, & \text{ if } i\neq q_k
\end{cases}
\]
\end{minipage}\\

\noindent and for $k = r+1, \dots, m$, let

\begin{minipage}{.45\linewidth}
\[
\I_{i}^k =
\begin{cases}
\I_{i}^{k-1}, & \text{ if } i=q_k \\
\I_i^{k-1} \cup \langle x_k \rangle, & \text{ if } i\neq q_k
\end{cases},
\]
\end{minipage}%
\begin{minipage}{.4\linewidth}
\[
\J_{i}^k =
\begin{cases}
\J_{i}^{k-1} \cup \langle x_k - (1,1) \rangle, & \text{ if } i=p_k \\
\J_i^{k-1} \cup \langle x_k \rangle, & \text{ if } i\neq p_k
\end{cases}
\]
\end{minipage}\\
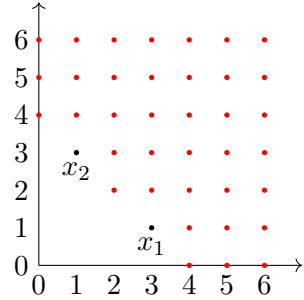
\begin{wrapfigure}[12]{r}{0.35\textwidth}
\centering
\begin{tikzpicture}[scale=.5]
\node[below] at (3,1){$x_1$};
\node[below] at (1,3){$x_2$};
\fill (3,1) circle [radius=2pt];
\fill (1,3) circle [radius=2pt];
\draw[->] (0,0) to (7,0);
\draw[->] (0,0) to (0,7);
\foreach \i in {0,...,6}{
\node[below] at (\i,0){$\i$};
\node[left] at (0,\i){$\i$};
\foreach \j in {4,...,6}{
\fill[red] (\i,\j) circle [radius=2pt];
}}
\foreach \i in {2,...,6}{
\foreach \j in {2,...,3}{
\fill[red] (\i,\j) circle [radius=2pt];
}}
\foreach \i in {4,...,6}{
\foreach \j in {0,1}{
\fill[red] (\i,\j) circle [radius=2pt];
}}
\end{tikzpicture}
\caption{The interval $\W$ for $m=2$ along with $x_1 = (3,1)$ and $x_2 = (1,3)$. \label{Wsupportx}}
\label{Wsupport}
\end{wrapfigure}
\noindent and let $\I_i = \I_i^m$ and $\J_i = \J_i^m$. This way, we ensure that there is no nonzero morphism from $I^{\I_i}$ to $I^{\J_j}(1)$ when $(i,j) \in P$, and no nonzero morphism from $I^{\J_j}$ to $I^{\I_i}(1)$ when $(j,i) \in Q$. In all other cases, there exist nonzero morphisms.
\begin{lemma}
Suppose $(i,j) \notin P$. Then there is an isomorphism $\Hom(I^{I_i}, I^{\J_j}(1)) \cong \kk$. In particular, any morphism $f\in\Hom(I^{I_i}, I^{\J_j}(1))$ is completely determined by $f_{(2m+1,2m+1)}$: if $f_p$ is nonzero, then $f_p = f_{(2m+1,2m+1)}$.
\end{lemma}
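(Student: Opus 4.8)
The plan is to identify $I^{\J_j}(1)$ with the interval module $I^{\J_j-(1,1)}$ and then to compute $\Hom(I^{\I_i},I^{\J_j-(1,1)})$ in two independent halves: an unconditional bound $\dim\leq 1$, which simultaneously gives the assertion that $f$ is determined by $f_{(2m+1,2m+1)}$, and a lower bound $\dim\geq 1$, which is the only place the hypothesis $(i,j)\notin P$ is used. Throughout write $B=\langle(0,0)\rangle$ and $B'=\{q\in\ZCat^2\mid q\leq(2m+1,2m+1)\}=B-(1,1)$. The key structural observation is that $\W$, each $\I_i$ and each $\J_j$ is by construction a finite union of sets $\langle p\rangle=[p,(2m+2,2m+2)]$ with $p\geq(0,0)$; hence each is contained in $B$ and is an \emph{up-set inside $B$} (if it contains $q$ and $q\leq q'\leq(2m+2,2m+2)$ then it contains $q'$). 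Consequently $\J_j-(1,1)$ is contained in $B'$ and is an up-set inside $B'$. Finally $(2m+2,2m+2)\in\W\subseteq\J_j$ and $(2m+1,2m+1)\in\W\subseteq\I_i$, so $(2m+1,2m+1)$ lies in $\I_i\cap(\J_j-(1,1))$ and is the top element of $B'$, hence of $\J_j-(1,1)$.

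For $\dim\leq 1$: let $f\colon I^{\I_i}\to I^{\J_j-(1,1)}$ be any morphism and $S=\{p\mid f_p\neq 0\}$. All internal maps of either module restricted to $S$ are identities, so $f$ is constant on each connected component of $S$. Moreover, commutativity of the naturality square at a relation $p\leq q$ with $p\in S$ and $q\in\J_j-(1,1)$ forces, since the target $I^{\J_j-(1,1)}_q$ is nonzero, first that $q\in\I_i$ and then that $f_q=f_p$; so every connected component of $S$ is an up-set inside $\J_j-(1,1)$. Since each element of $S$ lies below the top element $(2m+1,2m+1)$ of $\J_j-(1,1)$, every nonempty component contains $(2m+1,2m+1)$; hence $S$ is connected and $f_p=f_{(2m+1,2m+1)}$ whenever $f_p\neq 0$. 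Thus $f\mapsto f_{(2m+1,2m+1)}$ is injective and the last sentence of the lemma holds.

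For $\dim\geq 1$: set $S^{*}=\I_i\cap(\J_j-(1,1))$ and define $f_p=\id_\kk$ for $p\in S^{*}$ and $f_p=0$ otherwise. Commutativity at $p\leq q$ is automatic except in two cases: (a) $p\in S^{*}$ and $q\in(\J_j-(1,1))\setminus\I_i$; (b) $q\in S^{*}$ and $p\in\I_i\setminus(\J_j-(1,1))$. Case (a) cannot occur: $q\leq(2m+1,2m+1)$ places $q$ in $B$, and $\I_i$ is an up-set inside $B$ containing $p\leq q$, so $q\in\I_i$. Case (b) cannot occur precisely when $\I_i\cap B'\subseteq\J_j-(1,1)$, since every element of $S^{*}$ is $\leq(2m+1,2m+1)$ and so the downward closure of $S^{*}$ inside $\I_i$ equals $\I_i\cap B'$. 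It therefore remains to verify $\I_i\cap B'\subseteq\J_j-(1,1)$ under the hypothesis $(i,j)\notin P$. One checks: $\W\cap B'\subseteq\W-(1,1)\subseteq\J_j-(1,1)$ because $\W$ is an up-set inside $B$; any $\langle x_k\rangle$ glued onto $\I_i$ satisfies $\langle x_k\rangle\cap B'\subseteq\W-(1,1)$ because $x_k+(1,1)\in\W$; and a set $\langle x_k-(1,1)\rangle$ is glued onto $\I_i$ only when $k\leq r$ and $i=p_k$, in which case $\langle x_k-(1,1)\rangle\cap B'=\langle x_k\rangle-(1,1)$, which is contained in $\J_j-(1,1)$ iff $x_k\in\J_j$, and for $k\leq r$ one has $x_k\in\J_j$ iff $j\neq q_k$. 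Hence $\I_i\cap B'\subseteq\J_j-(1,1)$ holds exactly when $(i,j)\neq(p_k,q_k)$ for every $k\leq r$, i.e. exactly when $(i,j)\notin P$; in that case $f$ is a well-defined nonzero morphism, completing the proof. (Alternatively, the mere existence of a nonzero morphism $I^{\I_i}\to I^{\J_j}(1)$ for $(i,j)\notin P$ is precisely the claim stated in the paragraph preceding the lemma, and may be cited instead of re-deriving it.)

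I expect the main obstacle to be exactly that last inclusion $\I_i\cap B'\subseteq\J_j-(1,1)$: one must track which of the sets $\langle x_k\rangle$ and $\langle x_k-(1,1)\rangle$ are attached to $\I_i$ and which to $\J_j$ along the two-phase recursion, and confirm that the only way $\I_i$ can acquire a point below $(2m+1,2m+1)$ that is missing from $\J_j-(1,1)$ is through a set $\langle x_k-(1,1)\rangle$ with $(p_k,q_k)=(i,j)\in P$. The $\dim\leq 1$ half, and the verification that the candidate $f$ is natural, are purely formal.
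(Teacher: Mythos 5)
Your proof is correct, and it is organized differently from (and more completely than) the paper's. The paper proves only the \emph{determination} statement (which gives $\dim \leq 1$) by a single naturality chase through $(2m+1,2m+1)$, and that chase presupposes the structural claim ``$p+(1,1)\in\J_j$ for all $p\in\I_i$ with $p\leq(2m+1,2m+1)$, by construction and $(i,j)\notin P$'', which the paper does not verify; existence of a nonzero morphism (hence $\dim\geq 1$) is asserted only in the paragraph preceding the lemma. You split the argument into two genuinely independent halves: a $\dim\leq 1$ bound via connectivity of the support and the observation that components are up-sets ending at $(2m+1,2m+1)$ --- this is robust in that it needs \emph{no} hypothesis on $(i,j)$, since you deduce $p+(1,1)\in\J_j$ from $f_p\neq 0$ rather than from the construction --- and a $\dim\geq 1$ construction where you define $f$ to be $1$ on $\I_i\cap(\J_j-(1,1))$ and reduce naturality to the inclusion $\I_i\cap B'\subseteq\J_j-(1,1)$. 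Your case analysis of that inclusion (pieces $\W$, $\langle x_k\rangle$, $\langle x_k-(1,1)\rangle$) is exactly the verification the paper omits, and it simultaneously justifies the paper's unverified chase step. One small slip: $B'=\{q\mid q\leq(2m+1,2m+1)\}$ is not literally $B-(1,1)$ (that would be $\{q\mid (-1,-1)\leq q\leq (2m+1,2m+1)\}$), but since every point you consider lies in $\I_i\subseteq B$ and hence is $\geq(0,0)$, the discrepancy is harmless.
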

The same holds if $(j,i) \notin Q$ instead of $(i,j) \notin P$, and $\I_i$ and $\J_j$ are interchanged. As $f_p$ is a $\kk$-endomorphism, this implies that any $f$ can be identified with an element of $\kk$. 
\begin{proof}
Let $f: I^{\I_i} \to I^{\J_j}(1)$ be nonzero. If $p \notin \I_i$ or $p \nleq (2m+1,2m+1)$, $f_p = 0$. For $(2m+1,2m+1) \geq p \in \I_i$, we have $p+(1,1) \in \J_i$ by construction and the fact that $(i,j) \notin P$, so $\phi_{I^{\J_j}}(p+(1,1), (2m+2,2m+2))$ is nonzero and hence the identity. We get
\begin{align*}
f_p &= \phi_{I^{\J_j}(1)}(p, (2m+1,2m+1)) \circ f_p\\
&= f_{(2m+1,2m+1)} \circ \phi_{I^{\I_i}}(p, (2m+1,2m+1))= f_{(2m+1,2m+1)}.
\end{align*}
\end{proof}
Describing a morphism from $M = \bigoplus_{i=1}^n I^{\I_i}$ to $N(1) = \bigoplus_{j=1}^n I^{\J_j}(1)$ is the same as describing morphisms from $I^{\I_i}$ to $I^{\J_j}(1)$ for all $i$ and $j$\footnote{$\Hom(\oplus_i M_i, \oplus_j N_j) \cong \oplus_i \oplus_j \Hom(M_i, N_j)$.}. We have just proved that these can be identified with elements of $\kk$, so we conclude that any $f: M \to N(1)$ is uniquely defined by an $n \times n$-matrix $A_f$ where the entry $(i,j)$ is the element in $\kk$ corresponding to the morphism $I^{\I_i} \to I^{\J_j}(1)$ given by $f$. Note that we get the same result by writing $f_{(2m,2m)} = f_{(2m+1,2m+1)}: \kk^n \to \kk^n$ as a matrix, where each copy of $\kk$ in the domain and codomain comes from one of the interval modules $I^{\I_i}$ or $I^{\J_j}(1)$, respectively.

If we also have a morphism $g: N \to M(1)$, we can define a matrix $A_g$ symmetrically, and similarly $A_g$ is $g_{(2m,2m)} = g_{(2m+1,2m+1)}: \kk^n \to \kk^n$ in matrix form.
\begin{theorem}
\label{thm:CI=int}
With $f$ and $g$ as above, $(f,g)$ is a $1$-interleaving if and only if $A_f$ and $A_g$ are inverse matrices.
\end{theorem}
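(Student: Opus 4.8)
The plan is to detect both defining equalities of a $1$-interleaving at the single ``full'' point $(2m,2m)$. Since $(2m,2m)$ and $(2m+2,2m+2)$ both lie in $\W$, they lie in every $\I_i$ and every $\J_i$, so $M_{(2m,2m)}$, $M_{(2m+2,2m+2)}$, $N_{(2m,2m)}$ and $N_{(2m+2,2m+2)}$ are all $\kk^n$, and $\phi_M^2$, $\phi_N^2$ restrict at $(2m,2m)$ to $\id_{\kk^n}$. The crux is a \emph{rigidity lemma}: the evaluation map $\theta \mapsto \theta_{(2m,2m)}$ from $\Hom(M,M(2))$ to $\Hom(\kk^n,\kk^n)$ is injective, and likewise with $N$ in place of $M$. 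I would prove this summand-wise, using $\Hom(M,M(2)) \cong \bigoplus_{i,j} \Hom(I^{\I_i}, I^{\I_j}(2))$: given $\theta\colon I^{\I_i}\to I^{\I_j}(2)$ with $\theta_{(2m,2m)}=0$, the component $\theta_p$ can be nonzero only if $p \in \I_i$ and $p+(2,2) \in \I_j$; the latter forces $p \leq (2m,2m)$, since $\I_j$ lies in the box $[0,2m+2]^2$, and then the naturality square of $\theta$ along $p \leq (2m,2m)$ has both horizontal maps equal to $\id_\kk$ --- by $(2m,2m)\in\I_i$ and $(2m+2,2m+2)\in\I_j$ --- forcing $\theta_p = \theta_{(2m,2m)} = 0$. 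Hence $\theta = 0$.

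With the rigidity lemma in hand, the ``only if'' direction is a corner evaluation. If $(f,g)$ is a $1$-interleaving then $g(1)\circ f = \phi_M^2$, and evaluating at $(2m,2m)$ gives $g_{(2m+1,2m+1)}\circ f_{(2m,2m)} = \id_{\kk^n}$; under the matrix identifications set up just before the theorem (both $f_{(2m,2m)}$ and $f_{(2m+1,2m+1)}$ are $A_f$, both $g_{(2m,2m)}$ and $g_{(2m+1,2m+1)}$ are $A_g$), this reads $A_g A_f = I_n$. Symmetrically, $f(1)\circ g = \phi_N^2$ gives $A_f A_g = I_n$, so $A_f$ and $A_g$ are inverse matrices.

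For the ``if'' direction, assume $A_f$ and $A_g$ are inverse. Form $h_1 = g(1)\circ f - \phi_M^2 \in \Hom(M,M(2))$ and $h_2 = f(1)\circ g - \phi_N^2 \in \Hom(N,N(2))$. By the same corner computation, $(h_1)_{(2m,2m)} = A_g A_f - I_n = 0$ and $(h_2)_{(2m,2m)} = A_f A_g - I_n = 0$, so the rigidity lemma forces $h_1 = 0$ and $h_2 = 0$; that is, $g(1)\circ f = \phi_M^2$ and $f(1)\circ g = \phi_N^2$, which says exactly that $(f,g)$ is a $1$-interleaving.

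The only genuine work is the rigidity lemma, and there the delicate point is the codomain analysis: one must verify that the transition map $I^{\I_j}_{p+2} \to I^{\I_j}_{(2m+2,2m+2)}$ really is $\id_\kk$ whenever its source is nonzero --- which is exactly where it matters that the corner $(2m+2,2m+2)$ lies in every $\I_j$ and every $\J_j$, a property engineered into the definition of $\W$ --- and that nothing survives for $p \not\leq (2m,2m)$ because $M$ and $N$ are supported on a $(2m+3)\times(2m+3)$ box. I would also spell out the bookkeeping, already sketched before the theorem, identifying $f_{(2m,2m)}$, $g_{(2m,2m)}$ with the matrices $A_f$, $A_g$ and their composites with the matrix products $A_g A_f$, $A_f A_g$, since that is what turns the hypothesis ``$A_f$, $A_g$ inverse'' into the two interleaving equations at the corner.
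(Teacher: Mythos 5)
Your proof is correct and follows the same essential strategy as the paper: read off the matrix product $A_gA_f$ by evaluating the interleaving identity at the corner $(2m,2m)$, and argue that the identity at the corner propagates to all $p$. Your rigidity lemma cleanly packages what the paper does pointwise by chaining naturality squares through $\phi_M(p+(2,2),(2m+2,2m+2))$; it also makes precise what the paper elides when it calls that transition map ``the identity'' --- it is only a summandwise identity, i.e.\ injective, which is exactly what the evaluation-injectivity formulation records. The only cosmetic difference is that you derive $A_fA_g=I_n$ from the $N$-side equation, whereas the paper just notes that $A_gA_f=I_n$ already forces $A_f,A_g$ to be two-sided inverses since they are square.
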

\begin{proof}
Suppose $(f,g)$ is a $1$-interleaving. The internal morphism $\phi_M((2m,2m), (2m+2,2m+2))$ is the identity on $\kk^n$, and is by definition of interleaving the same as $$g(1)_{(2m,2m)} \circ f_{(2m,2m)} = g_{(2m+1,2m+1)} \circ f_{(2m,2m)} = A_g A_f.$$ Thus $A_g A_f$ is the identity matrix, and so $A_f$ and $A_g$ are inverses of each other, as both are $n \times n$-matrices.

Suppose $A_f$ and $A_g$ are inverse matrices. We must check that at every point $p \in \mathbb{Z}^2$, $\phi_M(p, p + (2,2)) = g(1)_p \circ f_p$. If $p \nleq (2m,2m)$ or $M(p) = 0$, both sides are zero. If $p \leq (2m,2m)$ and $M(p) \neq 0$, $$g(1)_p \circ f_p = \phi_M(p + (2,2), (2m+2,2m+2)) \circ g(1)_p \circ f_p,$$ since $\phi_M(p + (2,2), (2m+2,2m+2))$ must be the identity by construction of $M$ and the fact that $M(p) \neq 0$. This is equal to
\begin{align*}
&\phi_M(p + (2,2), (2m+2,2m+2)) \circ g_{p+(1,1)} \circ f_p\\
&=g_{(2m+1,2m+1)} \circ \phi_N(p + (1,1), (2m+1,2m+1)) \circ f_p\\
&=g_{(2m+1,2m+1)} \circ f_{(2m,2m)} \circ \phi_M(p, (2m,2m))\\
&=A_g A_f \circ \phi_M(p, (2m,2m))= \phi_M(p, (2m,2m))=\phi_M(p, p + (2,2)).
\end{align*}
\end{proof}
We have proved that defining morphisms $f: M \to N(1)$ and $g: N \to M(1)$ is the same as choosing $n \times n$-matrices $A_f$ and $A_g$ such that the entries corresponding to the elements of $P$ and $Q$ are zero, and that $(f,g)$ is a $1$-interleaving if and only if $A_f$ and $A_g$ are inverse matrices. Thus $M$ and $N$ are $1$-interleaved if and only if the CI-problem $(P,Q,n)$ is solvable.

We constructed $M$ and $N$ by setting all the interval modules comprising $M$ and $N$ equal to $I^\W$, then modifying them in $m$ steps each, where the complexity of each step is clearly polynomial in $n$. Thus the complexity of constructing $M$ and $N$ is polynomial in $n$, and so are the total dimensions of $M$ and $N$. Taking $n^2$ as the input complexity of solving a CI-problem $(P,Q,n)$, we have proved a reduction implying the following theorem:
\begin{theorem}
Determining the interleaving distance for modules $\ZCat^2\to \Vect$ is CI-hard. 
\end{theorem}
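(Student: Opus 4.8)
The plan is to assemble the material already developed in this section into a single polynomial-time reduction from CI-solvability; almost everything needed is in place, so the work is packaging. Fix a CI-problem $(P,Q,n)$ and set $m=|P|+|Q|$. The first step is to check that the passage $(P,Q,n)\mapsto(M,N)$ constructed above really is a polynomial-time reduction: each of $M,N$ is a direct sum of $n$ interval modules, and each interval $\I_i$ (resp.\ $\J_i$) is obtained from $\W$ by $m\le 2n^2$ update steps, each unioning in one half-open box $\langle x_k\rangle$ or $\langle x_k-(1,1)\rangle$ inside the $(2m+3)\times(2m+3)$ grid. Hence all the intervals, and thus $M$ and $N$, can be written down in time polynomial in $n$; and since $\dim M_{(a,b)},\dim N_{(a,b)}\le n$ at each of the $O(m^2)$ relevant grid points, the output size $\dim M+\dim N$ is $O(m^2 n)=\mathrm{poly}(n)$, i.e.\ polynomial in the natural input size $n^2$ of the CI-instance.

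The second step simply records the equivalence already proved. By the $\Hom$-computation above together with its $Q$-analogue and the identification $\Hom(\bigoplus_i M_i,\bigoplus_j N_j)\cong\bigoplus_{i,j}\Hom(M_i,N_j)$, a morphism $f\colon M\to N(1)$ is the same datum as an $n\times n$ matrix $A_f$ over $\kk$ whose entries indexed by $P$ vanish, and symmetrically a morphism $g\colon N\to M(1)$ is an $n\times n$ matrix $A_g$ whose entries indexed by $Q$ vanish; and by \cref{thm:CI=int}, $(f,g)$ is a $1$-interleaving iff $A_f$ and $A_g$ are mutually inverse. Therefore $M$ and $N$ are $1$-interleaved iff there is an invertible matrix with zero entries on $P$ whose inverse has zero entries on $Q$, which is exactly the solvability of $(P,Q,n)$.

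The third step is the routine translation to the metric-computation problem. Since $M$ and $N$ are $\ZCat^2$-indexed, $d_I(M,N)\in\{0,1,2,\dots\}$, and because a $\delta$-interleaving composes with internal morphisms to yield a $\delta'$-interleaving for every $\delta'\ge\delta$, we have $M$ and $N$ are $1$-interleaved iff $d_I(M,N)\le 1$. Consequently any procedure that computes $d_I(M,N)$ decides, with constant additional work, whether $(P,Q,n)$ is solvable; composing this with the polynomial-time reduction of the first step shows that computing the interleaving distance for modules $\ZCat^2\to\Vect$ is CI-hard (equivalently, the decision problem ``$d_I(M,N)\le 1$?'' is CI-hard under many-one reductions).

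I do not anticipate a genuine obstacle: the statement is a wrap-up of \cref{thm:CI=int} and the surrounding discussion, so the only point that warrants care is the quantitative bookkeeping of the first step — confirming that both the construction time and the output size $\dim M+\dim N$ are polynomial in $n$, which the bounds $m\le 2n^2$, grid size $(2m+3)^2$, and $\dim M_{(a,b)}\le n$ supply immediately.
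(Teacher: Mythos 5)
Your proof is correct and follows the paper's approach essentially verbatim: you reuse the $\Hom$-computation and \cref{thm:CI=int} to establish the equivalence, then verify that the construction $(P,Q,n)\mapsto(M,N)$ runs in polynomial time with polynomial output size. The only addition beyond the paper's wrap-up is the (harmless) explicit note that computing $d_I$ allows one to decide $d_I\le 1$, which the paper leaves implicit.
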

\begin{remark}
We give an example in \cref{app:unstable} of a CI-problem whose associated matrices $M$ and $N$ satisfy $d_I(M,N) = 1$ and $d_B(\B(M),\B(N)) = 2$. This shows that it is not enough to find the bottleneck distance of the barcodes of $M$ and $N$ to decide whether $M$ and $N$ are $1$-interleaved and thus whether the CI-problem is solvable. In fact, recent work shows that $d_B$ can be efficiently computed \cite{dey2018computing}. 
\end{remark}

We end this paper with the somewhat surprising observation that the interleaving distance of the above interval decomposable modules depends the characteristic $\textup{char}(\kk)$ of the underlying field $\kk$. That is, let $M, N: \ZCat^2\to \Vect_\kk$, $M', N': \ZCat^2\to \Vect_{\kk'}$, $\kk\neq \kk'$, and for which $\B(M) = \B(M')$ and $\B(N) = \B(N')$. Clearly, any matching distance $d$ would satisfy $d(\B(M), \B(N)) = d(\B(M'), \B(N'))$, but it is not always true that $d_I(M, N) = d_I(M', N')$.

For a fixed $n \geq 2$, let $Q = \{(2,2), \dots, (n+2,n+2) \}$ and $P = \{(1,1) \} \cup \{2,\dots,n+2\}^2 \setminus Q$. Then the CI-problem $(P,Q,n+2)$ is solvable if and only if the characteristic of $\kk$ divides $n$. We will only prove this for $n=2$ for clarity, but the argument easily generalizes to all $n$.

Assume that $(M,M^{-1})$ is a solution to $(P,Q,4)$:
$$
M =
\begin{bmatrix}
0&?&?&?\\
a&?&0&0\\
b&0&?&0\\
c&0&0&?
\end{bmatrix},
\quad
M^{-1} =
\begin{bmatrix}
?&d&e&f\\
?&0&?&?\\
?&?&0&?\\
?&?&?&0
\end{bmatrix}
$$
Here we have put the entries corresponding to the elements of $P$ and $Q$ equal to $0$, and left the rest as unknown. The entries we will use in the calculations that follow are labeled $a,b,c,d,e,f$. We see that $(MM^{-1})_{(2,2)} = ad$, $(MM^{-1})_{(3,3)} = be$, $(MM^{-1})_{(4,4)} = cf$, that is, $ad = be = cf = 1$. At the same time, $(M^{-1}M)_{(1,1)} = ad + be + cf$, so we get $1=1+1+1$, or $2=0$. Thus $\textup{char}(\kk) = 2$, and in this case we can put all the unknowns in $M$ and $M^{-1}$ above equal to $1$ to obtain a solution. (For $n>2$, we put the nonzero elements on the diagonal of $M$ equal to $-1$.)

Our motivation for introducing CI-problems was working towards determining the computational complexity of calculating the interleaving distance. While the last examples say little about complexity, they illustrate the underlying philosophy of our approach: By considering CI-problems, we can avoid the confusing geometric aspects of persistence modules and interleavings. E.g., in the case above, working with persistence modules over a $23 \times 23$ size grid is reduced to looking at a pair of $4 \times 4$-matrices.

\section{Discussion}
The problem of determining the computational complexity of computing the interleaving distance for multidimensional persistence modules (valued in $\Vect$) was first brought up in Lesnick's thesis \cite{lesnick2012multidimensional}. Although it has been an important open question for several years, a non-trivial lower bound on the complexity class has not yet been given. In light of \cref{thm:brooksbank}, one might hope that tools from computational algebra can be efficiently extended to the setting of interleavings. Theorem \ref{thm:NPHARD} is an argument against this, as it shows that the problem of computing the interleaving distance is NP-hard in general. This leads us to conjecture that the problem of computing the interleaving distance for multidimensional persistence modules is also NP-hard. Unfortunately, writing down the conditions for an interleaving becomes intractable already for small grids. To make the decision problem more accessible to researchers in other fields of mathematics and computer science, we have shown that the problem is at least as hard as an easy to state matrix invertibility problem. We speculate that this problem is also NP-hard. If that is not the case, then an algorithm would provide valuable insight into the interleaving problem for interval decomposable modules.

\appendix

\section{Bottleneck Distance}
\label{sec:bottleneck}
A \emph{matching} $\sigma$ between multisets $S$ and $T$ (written as $\sigma: S \nrightarrow T$) is a bijection $\sigma: S\supseteq S^\prime \to T^\prime\subseteq T$. Formally, we regard $\sigma$ as a relation $\sigma\subseteq S\times T$ where $(s,t)\in \sigma$ if and only if $s\in S^\prime$ and $\sigma(s) = t$. We call $S^\prime$ and $T^\prime$ the \emph{coimage} and \emph{image} of $\sigma$, respectively, and denote them by $\coim\sigma$ and $\im\sigma$.  If $w\in \coim \sigma\cup \im \sigma$, we say that \emph{$\sigma$ matches $w$.}  

We say intervals $\J,\K\subseteq \ZCat^n$ are $\delta$-interleaved if $I^\J$ and $I^\K$ are $\delta$-interleaved. Similarly, we say $\J$ is $2\delta$-trivial if $I^\J$ is $\delta$-interleaved with the $0$-module, i.e. the module $I^\emptyset$. For $\C$ a barcode over $\ZCat^n$ and $\delta\geq 0$, define $\C_{\delta}\subseteq \C$ to be the multiset of intervals in $\C$ that are not $\delta$-trivial.  

Define a \emph{$\delta$-matching} between barcodes $\mathcal C$ and $\mathcal D$ to be a matching 
$\sigma:\C\nrightarrow \D$ satisfying the following properties:
\begin{enumerate}
\item $\C_{2\delta} \subseteq \coim \sigma$ and $\D_{2\delta}\subseteq \im \sigma$.
\item If $\sigma(\J)=\K$, then $\J$ and $\K$ are $\delta$-interleaved.
\end{enumerate}
For barcodes $\C$ and $\D$, we define the bottleneck distance $d_B$ by
\[d_B(\mathcal C,\mathcal D)=\min\, \{\delta\in \{0,1, 2, \ldots\} \mid \exists\textup{ a }\delta\textup{-matching between }\mathcal C\textup{ and }\mathcal D\}.\]
It is not hard to check that $d_B$ is an extended pseudometric.  In particular, it satisfies the triangle inequality.
\section{Discrete Modules}
\label{app:B}
We define an \emph{(injective) $n$-D grid} to be a function $\G:\Z^n\to \R^n$ given by \[\G(z_1,\ldots, z_n)=(\G_1(z_1), \ldots, \G_n(z_n))\] for strictly increasing functions $\G_i:\Z\to \R$ with $\lim_{i\to -\infty}=-\infty$ and $\lim_{i\to \infty}=\infty$.  

Define $\fl_\G:\R^n\to \im(\G)$ by $\fl_\G(t)=\max \{s\in \im(\G)\mid s\leq t\}.$

For $\G$ an $n$-D grid, we let $\CoEx_\G:\C^{\ZCat^n}\to \C^{\RCat^n}$:
\begin{enumerate}
\item For $M$ a $\ZCat^n$-indexed persistence module and $a,b\in \RCat^n$, \[\CoEx_\G(M)_a=M_y,\qquad  \varphi_{\CoEx_\G(M)}(a,b)=\varphi_M(y,z),\] where $y,z\in \ZCat^2$ are given by $\G(y)=\fl_\G(a)$ and $\G(z)=\fl_\G(b)$.
\item The action of $\CoEx_\G$ on morphisms is the obvious one.
\end{enumerate}
Let \[(-)|_\G:\C^{\RCat^n}\to \C^{\ZCat^n}\] denote the restriction along $\G$.

We say that $M: \RCat^n\to \C$ is \emph{discrete} if there exists an $n$-grid $\G$ such that $M \cong \CoEx_\G(M|_\G)$. Clearly, if $M$ and $N$ are discrete then we may choose a grid $\G$ such that $M \cong \CoEx_\G(M|_\G)$ and $N \cong \CoEx_\G(N|_\G)$.

\section{Interleavings of Functors $\ZCat\to \Vect$}
\label{app:vec}
It is well-known \cite{carlsson2009theory} and easy to see that a persistence module $M: \ZCat\to \Vect$ is completely determined by its associated \emph{rank invariant} ${\rm rk}_M$, $${\rm rk}_M(a,b) = {\rm rank}(\phi_M(a,b)), \qquad a\leq b \in \ZCat.$$ The rank of an $m_1 \times m_2 $-matrix can be calculated in $\Op(m_1m_2^{\omega-1})$ \cite{ibarra1982generalization}, where $\omega$ is the matrix multiplication exponent. Let $d_i = \dim M_i$ and $d=\dim M = \sum_i d_i$, and assume that we are given a list of all $i$ such that $M_i$ is nonzero. The cost of calculating ${\rm rk}_M(i,j)$ for all pairs $i<j$ in the list is at most
\begin{align*}
\sum_{i<j} Cd_i d_j^{\omega-1} \leq C\left(\sum_i d_i \right) \left(\sum_i d_i^{\omega-1} \right)\leq C\left(\sum_i d_i \right) \left(\sum_i d_i \right)^{\omega-1} &\leq C\left(\sum_i d_i \right)^\omega\\ &= Cd^\omega
\end{align*}
for a sufficiently large constant $C$. This shows that the complexity of computing ${\rm rk}_M$ is $\Op(d^\omega)$. Note that the number of intervals $[a,b]$ in the barcode $\B(M)$ is ${\rm rk}_M(a,b) - {\rm rk}_M(a-1,b) - {\rm rk}_M(a,b+1) + {\rm rk}_M(a-1,b+1)$. Thus, once we got the rank invariant, we can extract $\B(M)$ in $\Op(d^2)$ operations. In conclusion, we have provided an algorithm which computes $\B(M)$ from $M$ in $\Op(d^\omega+d^2) = \Op(d^\omega)$ operations. 

Observe that $|\B(M)| \leq \dim M$. Now, assume that we are given barcodes $\B(M)$ and $\B(N)$, with $n=\dim M + \dim N$, and we want to decide if $M$ and $N$ are $\delta$-interleaved. By \cref{teo:isometry}, this is equivalent to deciding if there is a $\delta$-matching between $\B(M)$ and $\B(N)$, which can be done in $\Op(b^{1.5} \log b)$, where $b = |\B(M)| + |\B(N)| \leq n$ \cite{kerber2017geometry}. Thus, we can decide if $M,N: \ZCat\to \Vect$ are $\delta$-interleaved in $\Op(n^\omega)$.

\section{The Isomorphism Problem for $\ZCat^2 \to \Set$}
\label{app:isoGI}
The isomorphism problem for Reeb graphs can be rephrased as an isomorphism problem of $\ZCat^2$-indexed persistence modules. Indeed, following \cite{botnan2016algebraic}, one sees that Reeb graphs can be viewed as functors $\RCat^2 \to \Set$, and by \cite{de2015categorified} it follows that these functors are discrete in the sense of \cref{app:discrete}. As the isomorphism problem for Reeb graphs is graph isomorphism hard \cite{de2015categorified}, it follows immediately that the same is true for modules ${\ZCat^2}\to \Set$. We shall show that these problems are in fact graph isomorphism \emph{complete}. Since we have chosen the total cardinality as the input size, and every functor $\ZCat^2\to \Set$, except the one sending everything to the empty set, has infinite total cardinality, we consider functors $[n]^2 \to \Set$ instead. Here $[n]^2$ is $\{1,2,\dots,n\}^2$ considered as a full subcategory of $\ZCat^2$. 

Let $M, N: [n]^2\to \Set$. We shall associate a pair of multigraphs to $M$ and $N$ in a way that ensures that $M$ and $N$ are isomorphic if and only if the associated multigraphs are isomorphic. The isomorphism problem for multigraphs is GI-complete \cite{zemlyachenko1985graph}. 

An isomorphism between $M,N: [n]^2 \to \Set$ is a natural isomorphism, i.e. a natural transformation with a two-sided inverse. Concretely, such an isomorphism $f$ consists of bijections $f_p: M_p \to N_p$ for all $p \in [n]^2$ that commute with the internal morphisms of $M$ and $N$, meaning that $f_{p+(0,1)} \circ \phi_M(p,p+(0,1)) = \phi_N(p,p+(0,1)) \circ f_p$ and $f_{p+(1,0)} \circ \phi_M(p,p+(1,0)) = \phi_N(p,p+(1,0)) \circ f_p$ hold whenever everything is defined. It is not hard to check that $f^{-1}$ defined by $\left(f^{-1}\right)_p = \left(f_p\right)^{-1}$ is an inverse of $f$.

Given modules $M, N: [n]^2 \to \Set$, we may assume that their pointwise cardinalities are the same, since if not, we can immediately conclude that they are not isomorphic. Let $c = |M| = |N|$. We also assume that $M_p$ and $N_p$ are nonempty on $p=(1,1)$, and for at least one $p \in \{1\} \times [n] \cup [n] \times \{1\}$. This implies $c \geq n$. We define the graph $G(M) = (V,E)$ as follows.
\begin{itemize}
\item $V = \bigcup_{p \in [n]^2} M_p \cup \{T\}$.
\item There is a single edge between $x \in M_p$ and $y \in M_q$ if $\phi_M(p,q)(x) = y$ and either $q = p + (0,1)$ or $q = p + (1,0)$.
\item For $x \in M_{(a,b)}$, there are $n(a-1)+b$ edges between $x$ and $T$.
\end{itemize}
Except from the ones described, there are no edges in $G(M)$. We can visualize $G(M)$ as the graph we get by putting $|M_p|$ vertices at each point in $[n]^2$ and short horizontal and vertical edges given by the internal morphisms of $M$, and in addition one vertex $T$ which is incident to a certain number of edges from each other vertex. We have $|V| = c+1$ and $|E| \leq 2c^2 + c n^2 \leq (2+c)c^2$, since at most $2c^2$ edges come from the internal morphisms of $M$ and $cn^2$ is an upper bound on the number of edges incident to $T$. In other words $|V| + |E|$ is polynomial in $c$. Defining $G(N) = (V',E')$ analogously with $T'$ in place for $T$, we get the same for $|V'| + |E'|$.

Now we consider what an isomorphism $f:V \to V'$ from $G(M)$ to $G(N)$ must look like. Except for cases with $c \leq 2$, both graphs have exactly one vertex that is adjacent to all other vertices, so $T$ must be sent to $T' \in V'$. Since there are $n(a-1)+b$ edges between $x \in M_{(a,b)}$ and $T$, there must be $n(a-1)+b$ edges between $f(x)$ and $f(T) = T'$, implying $f(x) \in N_{(a,b)}$. Thus the restriction of $f$ to $M_p$ is a bijection $M_p \to N_p$ for each $p \in [n]^2$.

It is easy to see that $f$ is functorial. That is, there is an edge between $x \in M_p$ and $y \in M_q$ if and only if there is an edge between $f(x) \in N_p$ and $f(y) \in N_q$. Hence, we conclude that $f$ defines an isomorphism between $M$ and $N$ in the obvious way. 


\begin{remark}
With small adjustments, the reduction from isomorphism of functors $[n]^2 \to \Set$ to isomorphism of multigraphs would work just as well for any poset category $\PCat$ in place of $[n]^2$. This shows that  determining isomorphism between $\Set$-valued functors is at most as hard as GI regardless of the poset category.
\end{remark}

\section{The Isomorphism Problem for $\ZCat \to \Set$}
\label{app:mergetree}
We consider functors $[n] \to \Set$, where $[n] = \{1,2,\dots,n\}$ is a subcategory of $\ZCat$, as in \cref{app:isoGI}. A rooted tree is a tree with one vertex chosen as the root, and an isomorphism between two rooted trees is a graph isomorphism that sends the root of one tree to the root of the other. We will show that deciding whether functors $[n] \to \Set$ are isomorphic is linear in the total cardinality by reducing it to checking isomorphism between rooted trees, which is known to be linear in the number of vertices \cite[p.~85]{aho1974design}.

Given $M: [n]\to \Set$, let $T(M)$ be the rooted tree with vertex set $\bigcup_{k=1}^n M_k \sqcup \{r\}$, where we choose $r$ as the root and there is an edge between $x \in M_k$ and $y \in M_{k+1}$ if $\phi_M(k,k+1)(x) = y$. For persistence modules $M$ and $N$,  an isomorphism between $T(M)$ and $T(N)$ is a function that sends the root of $T(M)$ to the root of $T(N)$ and restricts to a bijection from $M_k$ to $N_k$ for each $k$. Moreover, $f$ preserves parent-child relations, which means that for $x \in M_k$, $k < n$, $\phi_N(k,k+1)(f(x)) = f(\phi_M(k,k+1)(x))$. This is exactly what it takes for $f$ restricted to $\bigcup_{k=1}^n M_k$ to define a natural transformation from $M$ to $N$. Thus, $T(M)$ and $T(N)$ are isomorphic as rooted trees if and only if $M$ and $N$ are isomorphic as functors.

The number of vertices of $T(M)$ is one more than the total cardinality of $M$. Assuming that for each $k$ we are given a list of tuples $(x,\phi_M(k,k+1)(x))$, where $x$ runs through the elements of $M_k$, we have exactly the information needed to run the algorithm in \cite[p.~84]{aho1974design} for checking isomorphism of $T(M)$ and $T(N)$ in linear time. Thus, deciding whether merge trees are isomorphic can be done in time linear in $|M| + |N|$.

\section{Example $d_I \neq d_B$}
\label{app:unstable}
Consider the CI-problem $(P,Q,3)$, where $P = \{(2,3),(3,2) \}$ and $Q = \{(2,2),(3,3) \}$. Applying the algorithm in Section \ref{sec:multid}, we get modules $M = I^{\I_1} \oplus I^{\I_2} \oplus I^{\I_3}$ and $N = I^{\J_1} \oplus I^{\J_2} \oplus I^{\J_3}$, where
\begin{align*}
I_1 &= \W \cup \langle x_1 \rangle \cup \langle x_2 \rangle \cup \langle x_3 \rangle \cup \langle x_4 \rangle,\\
I_2 &= \W \cup \langle x_1 - (1,1) \rangle \cup \langle x_2 \rangle \cup \langle x_4 \rangle,\\
I_3 &= \W \cup \langle x_1 \rangle \cup \langle x_2 - (1,1) \rangle \cup \langle x_3 \rangle,\\
J_1 &= \W \cup \langle x_1 \rangle \cup \langle x_2 \rangle \cup \langle x_3 \rangle \cup \langle x_4 \rangle,\\
J_2 &= \W \cup \langle x_1 \rangle \cup \langle x_3 - (1,1) \rangle \cup \langle x_4 \rangle,\\
J_3 &= \W \cup \langle x_2 \rangle \cup \langle x_3 \rangle \cup \langle x_4 - (1,1) \rangle.
\end{align*}
By Example \ref{ex:PQ}, $(P,Q,3)$ is solvable, which means that $M$ and $N$ are $1$-interleaved. Since they are not isomorphic, $d_I(M,N) = 1$.

\begin{figure}
\centering
\begin{tikzpicture}[scale = 1.5]
\node (a) at (0,0){$J_1$};
\node (b) at (1.5,0){$J_2$};
\node (c) at (3,0){$J_3$};
\node (A) at (0,2){$I_1$};
\node (B) at (1.5,2){$I_2$};
\node (C) at (3,2){$I_3$};
\draw[<->, thick] (a) to (A);
\draw[<->, thick] (a) to (B);
\draw[<->, thick] (a) to (C);
\draw[<->, thick] (A) to (b);
\draw[<->, thick] (A) to (c);
\draw[->, thick] (B) to (b);
\draw[->, thick] (C) to (c);
\draw[->, thick] (b) to (C);
\draw[->, thick] (c) to (B);
\end{tikzpicture}
\caption{Graph illustrating possible nonzero morphisms between interval modules; see \cref{app:unstable}.}
\label{fig:matching}
\end{figure}
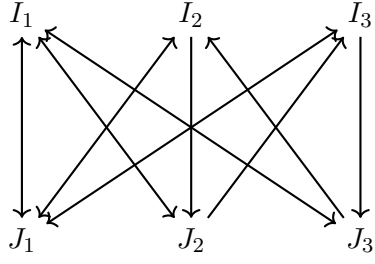

The graph in Figure \ref{fig:matching} has an edge from $A$ to $B$ if $A$ and $B$ are in different barcodes and there is a nonzero morphism from $\I^A$ to $\I^B(1)$. (A double-headed arrow means an edge in each direction.) In a $1$-matching between $\B(M)$ and $\B(N)$, if there is one, we need to match each $I_i$ with a $J_j$, and each corresponding pair of interval modules needs to be $1$-interleaved. Specifically, there needs to be a nonzero morphism both from $\I^{I_i}$ to $\I^{J_j}(1)$ and from $\I^{J_j}$ to $\I^{I_i}(1)$, that is, there must be edges in both directions between $I_i$ and $J_j$ in the graph. We see that both $I_2$ and $I_3$ can only be matched with $J_1$, and $J_1$ can only be matched with one of them. Thus there is no $1$-matching between $\B(M)$ and $\B(N)$. On the other hand, all the intervals are $2$-interleaved, so any bijection between $\B(M)$ and $\B(N)$ gives a $2$-matching. In other words, $d_B(\B(M), \B(N)) = 2$.
\bibliography{refs}

\end{document}